\begin{document}
\title{DIPS: Optimal Dynamic Index for Poisson $\boldsymbol{\pi}$ps Sampling}

\author{Jinchao Huang}
\affiliation{%
  \institution{The Chinese University of Hong Kong}
  \city{Hong Kong}
  \country{China}}
\email{jchuang@se.cuhk.edu.hk}

\author{Sibo Wang}
\affiliation{%
  \institution{The Chinese University of Hong Kong}
  \city{Hong Kong}
  \country{China}}
\email{swang@se.cuhk.edu.hk}

\renewcommand{\shortauthors}{Huang et al.}

\begin{CCSXML}
<ccs2012>
   <concept>
       <concept_id>10003752</concept_id>
       <concept_desc>Theory of computation</concept_desc>
       <concept_significance>500</concept_significance>
       </concept>
 </ccs2012>
\end{CCSXML}
\ccsdesc[500]{Theory of computation}

\keywords{Poisson Sampling, Probability-proportional-to-size Sampling, Dynamic Maintenance}


\begin{abstract}
This paper addresses the Poisson $\pi$ps sampling problem, a topic of significant academic interest in various domains and with practical data mining applications, such as influence maximization. The problem includes a set $\mathcal{S}$ of $n$ elements, where each element $v$ is assigned a weight $w(v)$ reflecting its importance. The goal is to generate a random subset $X$ of $\mathcal{S}$, where each element $v \in \mathcal{S}$ is included in $X$ independently with probability $\frac{c\cdot w(v)}{\sum_{v \in \mathcal{S}} w(v)}$, where $0<c\leq 1$ is a constant. The subsets must be independent across different queries. While the Poisson $\pi$ps sampling problem can be reduced to the well-studied subset sampling problem, updates in Poisson $\pi$ps sampling, such as adding a new element or removing an element,  would cause the probabilities of all $n$ elements to change in the corresponding subset sampling problem, making this approach impractical for dynamic scenarios. To address this, we propose a dynamic index specifically tailored for the Poisson $\pi$ps sampling problem, supporting optimal expected $\mathcal{O}(1)$ query time and $\mathcal{O}(1)$ index update time, with an optimal $\mathcal{O}(n)$ space cost. Our solution involves recursively partitioning the set by weights and ultimately using table lookup.  The core of our solution lies in addressing the challenges posed by weight explosion and correlations between elements. Empirical evaluations demonstrate that our approach achieves significant speedups in update time while maintaining consistently competitive query time compared to the subset-sampling-based methods.
\end{abstract}

\maketitle
\def\integer{\mathbb{Z}}
\def\real{\mathbb{R}}
\def\lc{\lceil}
\def\rc{\rceil}
\def\lf{\lfloor}
\def\rf{\rfloor}

\newcommand\mS[0]{\mathcal{S}}
\newcommand\mD[0]{\mathcal{D}}
\newcommand\mR[0]{\mathbb{R}}
\newcommand\mC[0]{\mathcal{C}}
\def\C{\mathcal{C}}
\def\D{\mathcal{D}}
\renewcommand{\S}{\mathcal{S}}

\renewcommand{\max}[2]{\mathrm{max}\{#1,#2\}}
\renewcommand{\min}[2]{\mathrm{min}\{#1,#2\}}
\newcommand\norm[1]{|#1|}
\newcommand{\set}[1]{\{#1\}}
\newcommand{\setht}[2]{\{#1,\ldots, #2\}}
\newcommand\normS[0]{|\mS|}
\newcommand{\logb}{\log_\mathrm{b}}

\newcommand{\bin}[2]{\mathit{Binomial}(#1,#2)}
\newcommand{\geo}[1]{\mathit{Geometric}(#1)}
\newcommand{\unif}[2]{\mathit{Uniform}(#1,#2)}
\newcommand\pv[0]{\textbf{p}(v)}
\def\bp{\textbf{p}}

\newcommand{\que}[0]{\emph{query}}
\newcommand\gen[0]{\emph{generate}}
\newcommand\gentwo[2]{\emph{generate}(#1,#2)}
\newcommand\upd[0]{\emph{change\_w}}
\newcommand\updtwo[2]{\emph{change\_w}(#1,#2)}
\newcommand\ins[0]{\emph{insert}}
\newcommand\instwo[2]{\emph{insert}(#1,#2)}
\newcommand\del[0]{\emph{delete}}
\newcommand\delone[1]{\emph{delete}(#1)}

\newcommand\compthree[3]{#1\leq#2\leq#3}
\newcommand\ifromoneton[2]{#1\in[#2]}

\newcommand\Bigo[1]{\mathcal{O}\left(#1\right)}
\newcommand\bigo[1]{\mathcal{O}(#1)}
\newcommand\bigone[0]{\bigo{1}}
\newcommand\bigolog[0]{\bigo{\log n}}
\newcommand\bigonlog[0]{\bigo{n\log n}}
\newcommand\bigon[0]{\bigo{n}}
\newcommand\bigonorms[0]{\bigo{|\mathcal{S}|}}
\newcommand\bigomu[0]{\bigo{1+\mu_\mS}}
\newcommand\bigologmu[0]{\bigo{\log n+\mu}}
\newcommand\bigologk[0]{\bigo{\log n+k}}

\newcommand\mul[0]{\mu_{S_l}}
\newcommand\ssp[0]{\textit{subset sampling problem}}

\SetKwInput{KwInput}{Input}
\SetKwInput{KwOutput}{Output}

\def\header{\vspace{1mm} \noindent}
\def\figcapdown{\vspace{-0mm}}

\newcommand{\revise}[1]{\begin{color}{blue}{#1}\end{color}}
\newcommand{\todo}[1]{\textcolor{red}{\bf [TO DO: #1]}}
\newcommand{\sibo}[1]{{\textcolor{blue}{[Sibo: #1]}}}

\newcommand\pps[0]{Poisson $\pi$ps sampling}
\newcommand\dub[0]{\text{DIPS}}
\newcommand{\rdss}{R-ODSS}
\newcommand{\rhss}{R-HSS}
\newcommand{\rbss}{R-BSS}
\newcommand{\dips}{DIPS}

\newtheorem{problem}{Problem}


\section{Introduction}   
In the field of statistical sampling, \emph{Poisson probability-proportional-to-size (Poisson $\pi$ps) sampling} is a technique recognized for its effectiveness in handling heterogeneous populations. This method assigns each population unit an inclusion probability proportional to an auxiliary variable indicative of the unit's size or significance. It ensures that larger or more critical units have a higher likelihood of being included in the sample, thereby enhancing the precision of resultant estimates. The flexibility of this technique, allowing for tailored inclusion probabilities based on prior knowledge or external data, makes it a powerful tool in practical applications. 

\pps~ is particularly advantageous in domains such as survey sampling~\cite{survey-sampling,survey-sampling2} and business statistics~\cite{business,sps,business3}, where population units exhibit significant variation in importance or size. Moreover, \pps~ is a pivotal cornerstone in important database and data mining applications. For instance, commercial databases offer efficient sampling methods for SQL and XQuery queries~\cite{Db2} and some queries need to sample each record based on its weight and the total weights of all records~\cite{weight1,weight2,weight3,weight4}. In the context of \emph{Influence Maximization} (IM), a fundamental data mining task, \pps~ is used to generate random RR-sets \cite{BorgsBCL14,Guo0WC20Subsim,GuoFZW23,FengCGZW24,GuoWWLT22,BianGWY20}. The generation of random RR-sets involves randomly selecting a vertex as the target and then performing a stochastic \emph{breadth-first search} (BFS) in the reverse direction of the edges. Here, ``stochastic'' implies that the existence of each incoming edge (and consequently, the in-neighbor) of a current visited vertex is determined in a probabilistic manner. In the widely-used \emph{Weighted Cascade} (WC) model \cite{Guo0WC20Subsim,TangTXY18OPIMC}, each incoming edge exists with a probability that is proportional to its weight. When no information about the edge $\langle u, v \rangle$ is given, the weight is typically assigned equally. Consequently, the probability of the edge $\langle u, v \rangle$ is given by $1/d_{in}(v)$, where $d_{in}(v)$ is the in-degree of vertex $v$. However, existing studies, e.g., \cite{GoyalBL10,KutzkovBBG13}, show that such a solution ignores the importance of different in-neighbors to vertex $v$ and setting different weights is suggested based on historical data. Given different weights, then the goal is to sample the set of incoming neighbors according to the \pps.

Despite its importance, \pps~ presents challenges, particularly in balancing the sampling efficiency and index maintenance. Current approach is to calculate the inclusion probability of each element and convert the \pps~ problem into the \emph{subset sampling} problem, where each element is associated with an independent inclusion probability instead of a weight. However, this method struggles with dynamic updates. In particular, suppose we have a set of $n$ elements with weights 1 to $n$ and then we add a new element with weight $n^3$. The probability of all elements get affected significantly and cannot be efficiently resolved with the subset sampling approach since this necessitates $n$ update operations in the corresponding subset sampling problem, rendering this approach prohibitively expensive for dynamic scenarios. However, in real-world scenarios, data is increasingly dynamic, creating a pressing need for efficient algorithms that can adapt to changes in real-time. 
 Thus, an algorithm specifically designed for \pps~ is needed to efficiently accommodate dynamic settings.

In this paper, we introduce \dips\footnote{Optimal \underline{D}ynamic \underline{I}ndex for \underline{P}oisson $\pi$ps \underline{S}ampling}, an optimal dynamic index maintenance solution designed to enhance the efficiency and scalability of \pps. Our framework achieves expected optimal $\mathcal{O}(1)$ query and index update times for both insertions and deletions, while maintaining a space cost of $\mathcal{O}(n)$, necessary for storing the $n$ elements.
The core idea of our DIPS framework involves recursively partitioning the input set by weights and integrating advanced techniques to manage the impact of a single weight update on the inclusion probabilities of all elements. This approach reduces the problem size to a manageable level, at which point a carefully designed lookup table for \pps~ is employed.
There are challenges specific to \pps~ that do not exist in subset sampling. One major challenge is weight explosion, which can occur when reducing the problem to smaller sizes. In \pps, element weights can be unbounded, whereas in subset sampling, the element inclusion probabilities are confined to the $[0,1]$ range. It is necessary to find a way to bound the weights so that a table lookup technique can be used effectively. Another challenge is that the inclusion probabilities of all elements in \pps~ are correlated, unlike in subset sampling where they are independent. This correlation complicates the design of a lookup table.
Our proposed DIPS framework effectively tackles these challenges specific to \pps, offering a solution for \pps~ with optimal time complexity in dynamic settings.

With its optimal time complexity, we further implement \dips~ and show that our \dips~ is also practically efficient. Experimental evaluations demonstrate the superb efficiency of our solution for both query processing and index update. We further integrate our solution into the well-known data mining application, the Influence Maximization problem, and show that our solution advances all previous methods under the dynamic setting.

\header 
{\bf Organization.} The remainder of this paper is structured as follows: Sec. \ref{sec:preliminaries} provides essential background and demonstrates why existing optimal solutions for subset sampling cannot be readily adapted to handle dynamic \pps. Sec. \ref{sec:our-solution} presents our main solution. Sec. \ref{sec:exp} presents experimental results demonstrating our solution's efficiency. Sec. \ref{sec:dynamic-im} shows how integrating \dips~ into dynamic Influence Maximization yields improved query and update efficiency. Sec. \ref{sec:conclusion} concludes the paper.


\section{Preliminaries}\label{sec:preliminaries}
    \subsection{Problem Formalization}\label{sec:problem-definition} 
            \header
        {\bf Poisson $\boldsymbol{\pi}$ps sampling.} 
        Consider a set $\S=\setht{v_1,v_2}{v_n}$ of elements, where each $v\in\S$ is assigned a weight $w(v)$. Define $W_\S=\sum_{u\in\mS}w(u)$. The \pps~query aims to randomly sample a subset $X$ of $\S$, such that each $v\in\S$ is independently sampled into $X$  with probability $\frac{c\cdot w(v)}{W_\S}$, where $0<c\leq 1$ is a constant. Importantly, the sampled subset must be independent of those returned by previous queries. More formally, we have the following definition.

        \begin{problem}[Poisson $\pi$ps Sampling (PPS)] {
            Given a set $\S$ of $n$ elements, a constant $c\in(0,1]$, and a function $w:\S\to\real_{\geq0}$, draw a random subset $X$ from $\S$ with
                $$
                \Pr[X=T\subseteq \S]=\left( \prod_{v\in T}\frac{c\cdot w(v)}{W_\S} \right)\left( \prod_{v\in {\S\setminus T}}(1-\frac{c\cdot w(v)}{W_\S}) \right).
                $$
        } \end{problem}
        We call the tuple $\Phi=\left<\S,w,c\right>$ a \pps~problem instance. Note that we require $c\in(0,1]$ here to ensure that the probabilities are well-defined.
        
        We further define three dynamic operations: {\em (i)} $\updtwo{v}{w}$, which resets the weight $w(v)$ to $w$; {\em (ii)} $\instwo{v}{w}$, which inserts $v$ into $\mathcal{S}$ and sets $w(v)$ to $w$; and {\em (iii)} $\delone{v}$, which deletes $v$ from $\mathcal{S}$. Without loss of generality, we assume that no duplicate keys exist in $\mathcal{S}$. These operations are collectively referred to as \emph{update} operations. It is important to note that a single \emph{update} operation can affect the sampling probabilities of all elements, even if it only changes the weight of one element. The main challenge, therefore, lies in supporting constant-time \emph{update} operations, despite the potential changes to all sampling probabilities.

        \header
        {\bf Subset sampling.}     
        A problem closely related to our studied \pps~is the subset sampling problem. In this problem, we are given the same set $\mS$ of element. However, each $v\in\S$ is assigned a probability $p(v)$ instead of a weight. The subset sampling query aims to randomly sample a subset $Y$ of $\mathcal{S}$, where each $v \in \S$ is independently sampled into $Y$ with probability $p(v)$. As with \pps, the sampled subset must be independent of those returned by previous queries. Another key difference is that the update operation in the subset sampling problem changes the sampling probability of an element rather than its weight. More formally, we have the following definition.
        \begin{problem}[Subset Sampling (SS)] {
            Given a set $\S$ of $n$ elements, and a function $p:\S\to[0,1]$, draw a random subset $Y$ from $\S$ with
                $$
                \Pr[Y=T\subseteq \S]=\left( \prod_{v\in T}\pv \right)\left( \prod_{v\in {\S\setminus T}}(1-\pv) \right).
                $$
        } \end{problem}

        We call the tuple $\phi=\left<\S,p\right>$ a subset sampling~problem instance. 
        Notice that for this problem, all the update operations act on $p(v)$.
        
        \header
        {\bf Remark.} Throughout this paper, we assume that generating a random value from $[0,1]$ or a random integer from $[a,b]$ requires constant time. Furthermore, for a truncated geometric random variable $G$ with parameters $p, q\in[0,1]$, where $\Pr[G = i]=p(1-p)^i/q$, we can generate samples in constant time~\cite{1+u_ss, taocpv2} using the formula $\left\lfloor\frac{\log(1 - q\cdot\unif{0}{1})}{\log(1 - p)}\right\rfloor$.

    \subsection{Related Work} \label{sec:related-work}
        In the literature, research work mostly related to the \pps~ problem is the line of research work on subset sampling. Indeed, if we only consider the static setting, there is almost no difference after converting the weights into probabilities. 
        The subset sampling problem has been studied for decades. Earlier research \cite{constant_p_ss} made progress in understanding special cases of the problem, such as the case where $\pv=p$ for all $v \in \S$. In this case, they achieve $\bigo{1+\mu_\S}$ expected query time, where $\mu_\S$ is the sum of the probabilities of all elements. The general case of the subset sampling query is considered in \cite{logn+u_ss}, where Tsai et al. design algorithms with $\bigon$ space and $\bigo{\log n + \mu_\S}$ query time. More recently, Bringmann and Panagiotou \cite{1+u_ss} design an algorithm that solves the subset sampling problem with $\bigon$ space and $\bigo{1+\mu_\S}$ expected query time. They additionally prove that their solution is optimal.  
        It is worth noting that all of the solutions mentioned above consider a static dataset that fits in memory. In real-world applications, however, the dataset may change over time. 
        To tackle this challenge, most recently, 
        two independent work~\cite{ODSS,huang2023subsetsamplingextensions} both present optimal solutions for the dynamic subset sampling problem with 
        $\bigo{1+\mu_\S}$ expected query time and $\bigo{1}$ index update time using linear space. However, if we consider the dynamic setting, the \pps~ is clearly more challenging than the dynamic subset sampling problem and the solution in \cite{ODSS} cannot tackle the this challenge. More details of their solution \cite{ODSS} and why their solution cannot be migrated to the \pps~ problem will be elaborated shortly. To our knowledge, there is no existing work on solving \pps~ under the dynamic setting. 

    \subsection{Relation Between PPS and SS} 
    Given a PPS problem instance $\Phi=\left<\S,w,c\right>$, where $\norm{\S}=n$, we can calculate $p_w(v)=\frac{c\cdot w(v)}{\sum_{u\in\mS}w(u)}$ for each $v\in\S$ in $\bigon$ time and get an SS problem instance $\phi=\left<\S,p_w\right>$.  Suppose that we can answer the SS query in $\bigo{Q(n)}$ time and perform an update on $\phi$ in $\bigo{M(n)}$ time, after $\bigo{I(n)}$ time initialization. Then the above reduction already gives us a solution for the PPS problem with $\bigo{Q(n)}$ query time after $\bigo{I(n)}$ initialization time. However, since an update operation on $\Phi$ would change $p_w(v)$ for all $v\in\S$, we need to invoke $\bigon$ update operation on $\phi$, which cost $\bigo{nM(n)}$ time. Indeed, if we need to do $\bigon$ update, we could reconstruct the whole data structure instead. However, the $\bigo{I(n)}$ cost for a single update to $\Phi$ is still too costly. Our goal is to finally achieve $\bigone$ update time while maintaining the same query time and initialization time asymptotically. Thus we need to specifically design an algorithm for PPS to achieve this goal.

    \subsection{SOTA Solution for SS}
    Among the literature exploring SS, ODSS~\cite{ODSS} is the state-of-the-art solution for the dynamic setting. Its key insight is twofold. For elements with probabilities less than $\frac{1}{n^2}$, the probability of sampling at least one of them is below $\frac{1}{n}$. This allows for a brute force approach in handling these rare cases. For elements with probabilities exceeding $\frac{1}{n^2}$, the solution partitions them into $O(\log n)$ groups. In group $i$, the probabilities of elements fall within the range $(2^{-i}, 2^{-i+1}]$. The benefit of this partition is that within each group, subset sampling can be performed efficiently. Consequently, the problem reduces to performing subset sampling on $O(\log n)$ groups. After two rounds of reduction, the problem size further decreases to $m = O(\log \log n)$. The solution then discretizes the interval $[0,1]$ into $O(\log \log n)$ equal scales and pre-computes a lookup table for SS instances where the problem size is $m$ and the probabilities correspond to these scales. For queries, each sampling probability is rounded up to the nearest scale, and the lookup table is consulted to obtain a set $X$. Finally, rejection sampling is applied to the elements in $X$ to correct for the overestimated probabilities.

    \subsection{Difficulty in Migrating ODSS to PPS}
    The solution presented in ODSS is designed for the subset sampling and cannot be easily adapted for use in \pps. 

    Firstly, in SS, if we group several elements together, the probability of sampling at least one element from that bucket naturally has an upper bound 1, which is given by the probability semantic. However, in PPS, if we put several elements into a bucket, we need to maintain the total weight of this bucket. Thus, the weight input grows as we group the elements together. The weight would finally explode after even constant rounds of recursion. Notice that we aim to finally opt for table lookup. To perform table lookup, the problem should have as few statuses as possible. Thus, it is also more challenging for \pps. 

    Secondly, in SS, we always only need to care about the same $t=\lceil 2\log n \rceil$ ranges $(\frac{1}{2},1], (\frac{1}{4}, \frac{1}{2}],\cdots,(2^{-t}, 2^{1-t}]$. In PPS, however, as we have no restriction on the input weight, the top $\bigo{\log n}$ ranges might be dynamically changing with every update. 

    Finally, in SS, the elements are totally unrelated to each other. Thus, we can sample from the buckets in hierarchy in a bottom-up fashion. In PPS, however, all the elements are related to each other, based on the expression of the sampling probability. Thus, we need to sample from the grouping hierarchy top down. Another challenge caused by the correlation between the elements is that, in the final table lookup step, probability or weight round up is required to reduce status number. We can safely round up the sampling probability of each element for SS. In PPS, however, if we simply round up each weight, we can not guarantee that the probability that each element being sampled is at least that of before rounding up, posing a challenge in the lookup table design. 


\section{A Dynamic Index for PPS} \label{sec:our-solution}
    We propose DIPS, a dynamic data structure for PPS that achieves optimal time and space complexity. At a high level, our solution consists of three key components:

    \header
    {\bf Building Blocks.} We first develop two fundamental data structures that can handle special cases efficiently - one for elements with near uniform weights, and another for elements with weights significantly lower than average.

    \header
    {\bf Size Reduction.} To handle arbitrary weights, we group elements into buckets based on their weight ranges, then organize these buckets into chunks. By carefully managing the significant chunks (containing the heaviest elements) and applying our building blocks, we reduce the problem size from $n$ to $\bigo{\log n}$ elements.

    \header
    {\bf Table Lookup.} After reduce the problem size twice to $\bigo{\log\log n}$, we develop a lookup table approach that discretizes weights while preserving sampling probabilities through rejection sampling. 

    The combination of these techniques yields a data structure that requires $\bigon$ space and supports queries in $\bigone$ expected time, with $\bigone$ expected time for updates, insertions, and deletions. We elaborate on each component in the following subsections.
    
    \subsection{Building Blocks}\label{sec:blocks} 
        We first present two fundamental data structures that handle special cases of PPS efficiently. 
        
        \begin{lemma}[bounded weight ratio]\label{wss_grs} {
            Given a PPS problem instance $\Phi=\left<\S, w, c\right>$ and a subset $T\subseteq\S$, if all weights in $T$ fall within a bounded ratio range $w(T)\subseteq(\bar{w}/b, \bar{w}]$ where $b$ is a constant and $\bar{w}\in\real_{>0}$, then we can:
                (i) Initialize a data structure $\D$ of size $\bigo{t}$ in $\bigo{t}$ time, where $t=|T|$;
                (ii) Support queries in $\bigo{1}$ expected time;
                (iii) Support updates, insertions and deletions in $\bigone$ expected time.
        } \end{lemma}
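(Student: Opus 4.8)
The plan is to reduce the task to \emph{oversampling at a common rate followed by rejection}, exploiting two consequences of the bounded ratio. First, the expected number of sampled elements is tiny: since every $v\in T$ satisfies $w(v)\le\bar w$ and the inclusion probability is $p(v)=\frac{c\,w(v)}{W_\S}$, the common upper bound $P:=\frac{c\bar w}{W_\S}\ge p(v)$ holds for all $v\in T$. Because each weight also exceeds $\bar w/b$, we have $W_\S\ge W_T=\sum_{v\in T}w(v)>t\bar w/b$, hence $tP=\frac{c\,t\bar w}{W_\S}<cb\le b=\bigone$. So in expectation only a constant number of elements are ever ``touched''. Second, for every $v\in T$ the ratio $p(v)/P=w(v)/\bar w$ lies in $(1/b,1]$, i.e. it is bounded below by the constant $1/b$.

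\textbf{Query.} I would store the elements of $T$ in an array and (i) generate the set of \emph{candidates} by walking through the array with the constant-time truncated geometric generator of the Remark, which reproduces $t$ independent $\mathit{Bernoulli}(P)$ trials while spending only $\bigone$ time per success; (ii) for each candidate $v$, accept it into $X$ independently with probability $w(v)/\bar w=p(v)/P$. Each $v$ then lands in $X$ with probability $P\cdot\frac{w(v)}{\bar w}=p(v)$, and the inclusion events are independent across $v$ because both the candidate trials and the acceptance coins are independent. The number of candidates is $\bin{t}{P}$ with mean $tP=\bigone$, the geometric walk performs one extra draw beyond the last candidate, and each candidate costs $\bigone$; hence the query runs in $\bigo{1+tP}=\bigone$ expected time.

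\textbf{Index and updates.} The index is a dynamic array holding $(v,w(v))$ together with a hash map from keys to array positions, built in $\bigo{t}$ time and space. The decisive point is that $P$ need not be stored per element: it is recomputed at query time from a single globally maintained value $W_\S$, so a weight change anywhere in $\S$ costs only an $\bigone$ update of $W_\S$ and never forces us to scan $T$. Within $T$, \instwo{v}{w} appends to the array, \delone{v} moves the last element into the vacated slot, and \updtwo{v}{w} overwrites the stored weight; each touches the hash map in $\bigone$. A weight update that would leave the range $(\bar w/b,\bar w]$ is handled by the enclosing structure, which migrates the element to the appropriate bucket; the block itself only maintains the invariant.

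\textbf{Main obstacle.} The delicate part is not the algorithm but the two uniform bounds that make it constant-time: showing $tP=\bigone$ (so that oversampling produces only $\bigone$ candidates in expectation, independent of $t$) and that acceptance succeeds with probability at least $1/b$ (so rejection wastes only a constant factor). Both hinge entirely on the bounded-ratio hypothesis $w(T)\subseteq(\bar w/b,\bar w]$ together with $c\le1$; once these are in hand, correctness and the $\bigone$ query/update bounds follow directly, and the lazy computation of $P$ from a global $W_\S$ is what keeps updates constant-time despite a single weight change perturbing all inclusion probabilities.
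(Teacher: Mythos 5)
Your proposal is correct and follows essentially the same route as the paper's proof in Appendix~\ref{apd:grs}: a dynamic array with a hash map for positions, oversampling every element at the common rate $\bar{p}=c\bar{w}/W_\S$ via constant-time geometric skips, and rejection with acceptance probability $w(v)/\bar{w}$, with the expected candidate count bounded by $bc=\bigone$ exactly as you argue. No substantive differences to report.
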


        The key insight is that when weights are bounded within a constant ratio, we can use rejection sampling with geometric random variables to efficiently generate samples. By maintaining a dynamic array and carefully choosing sampling parameters, we achieve constant expected time for all operations. The complete pseudocode and proof are provided in Appendix~\ref{apd:grs}.

        \begin{lemma} [subcritical-weight] \label{wss_naive} 
            Given a PPS problem instance $\Phi=\left<\S, w, c\right>$ and a subset $T\subseteq\S$, if all weights in $T$ are bounded above by $\bar{w}=\bigo{W_\S/n^2}$, then we can:
                (i) Initialize a data structure $\D$ of size $\bigo{t}$ in $\bigo{t}$ time, where $t=|T|$;
                (ii) Support queries in $\bigone$ expected time;
                (iii) Support updates, insertions and deletions in $\bigone$ expected time.
        \end{lemma}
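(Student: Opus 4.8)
Under the subcritical condition, a query is an extremely rare event, and the plan is to exploit this so that a query never has to inspect all $t$ elements. Since every $v\in T$ satisfies $w(v)\le\bar{w}=\bigo{W_\S/n^2}$ and $c\le 1$, its inclusion probability is $p(v)=\frac{c\cdot w(v)}{W_\S}\le\frac{c\cdot\bar{w}}{W_\S}=\bigo{1/n^2}$. Hence the expected number of elements drawn from $T$ is $\sum_{v\in T}p(v)\le t\cdot\bigo{1/n^2}=\bigo{1/n}$ (using $t\le n$), and by a union bound the probability that a query returns a nonempty set is only $\bigo{1/n}$. This is exactly the ``brute force for rare cases'' situation noted for ODSS, and I would realize it cleanly through a proposal-plus-rejection scheme driven by geometric skipping.

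For the data structure $\D$, I would store the elements of $T$ in a dynamic array $T[0\mathinner{\ldotp\ldotp}t-1]$ together with a hash map from element keys to array indices, and maintain the scalar $W_\S$. Define the uniform upper bound $\bar{p}=\frac{c\cdot\bar{w}}{W_\S}$, which satisfies $p(v)\le\bar{p}$ for all $v\in T$ because $w(v)\le\bar{w}$. To answer a query I would walk through the array by geometric skips: repeatedly draw a gap from $\geo{\bar{p}}$ to advance to the next \emph{candidate} index (each position being an independent candidate with probability $\bar{p}$), stopping once the index exceeds $t-1$. Each candidate $T[i]$ is then accepted into the output with probability $p(T[i])/\bar{p}=w(T[i])/\bar{w}$. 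The decisive point is that this acceptance ratio is independent of $W_\S$, so a weight update need only touch the array and the running value of $W_\S$.

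Correctness follows since each element $v$ is output exactly when its position is a candidate and its rejection coin succeeds, i.e. with probability $\bar{p}\cdot\frac{w(v)}{\bar{w}}=p(v)$, and these events are mutually independent across positions, reproducing the PPS law restricted to $T$. For the running time, the expected number of candidates is $t\cdot\bar{p}\le n\cdot\bigo{1/n^2}=\bigo{1/n}$, and by the Remark each geometric skip and each rejection coin costs $\bigo{1}$; together with the $\bigo{1}$ recomputation of $\bar{p}$ this yields $\bigone$ expected query time. The operation $\updtwo{v}{w}$ rewrites one array slot and adjusts $W_\S$; $\instwo{v}{w}$ appends to the array, inserts into the hash map, and adjusts $W_\S$; and $\delone{v}$ uses the standard swap-with-last-and-pop trick plus a hash-map deletion and a $W_\S$ adjustment. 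Each runs in $\bigone$ expected time, $\D$ occupies $\bigo{t}$ space, and it is built in $\bigo{t}$ time.

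The main obstacle is precisely the correlation emphasized in the introduction: a single update alters $W_\S$ and therefore every $p(v)$ and $\bar p$ simultaneously, so naively one cannot hope for $\bigone$ updates. The resolution I would rely on is that the factorization $p(v)=\bar{p}\cdot(w(v)/\bar{w})$ isolates all $W_\S$-dependence into the single scalar $\bar{p}$, recomputed lazily at query time, while the per-element rejection ratios $w(v)/\bar{w}$ stay fixed. The remaining care is in the time analysis, where I must verify that the invariant $\bar{w}=\bigo{W_\S/n^2}$ combined with $t\le n$ is exactly what pins the expected candidate count at $\bigo{1/n}$, so that the rare-event intuition translates into a genuine $\bigone$ expected query bound.
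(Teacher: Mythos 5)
Your proposal is correct and follows essentially the same route as the paper: the paper's proof simply reuses the Lemma~\ref{wss_grs} structure (dynamic array, hash table, geometric skipping with the uniform bound $\bar{p}=c\bar{w}/W_\S$, then per-candidate rejection), and observes that the subcritical condition makes the expected work $\bigone$. Your accounting via the expected candidate count $t\bar{p}=\bigo{1/n}$ is just a slightly more explicit version of the paper's bound $\bigo{1}+\frac{W_T}{W_\S}\cdot\bigon=\bigone$.
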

        \begin{proof}
            The data structure and operations are identical to those in Lemma~\ref{wss_grs}. In this special case, since the weight of $T$ is subcritical to the total weight, i.e., $\frac{W_T}{W_\S}=\bigo{\frac{1}{n}}$, we can even scan the entire $T$ and toss a coin for each element to decide whether to include it in the sample. The expected query time is $O(1)+\frac{W_T}{W_\S}\cdot\bigon=\bigone$.
        \end{proof}        

    \subsection{Size Reduction}\label{sec:sr} 
        According to Lemma~\ref{wss_grs}, we can achieve the desired time and space complexity for PPS problem instances where weights fall within a bounded interval. For general instances with unrestricted weights, we can partition elements in $\S$ into disjoint buckets based on weight ranges. While sampling within individual buckets is straightforward, the challenge lies in sampling across buckets, as there could be $\bigon$ buckets in the worst case. A key observation is that the top $\bigo{\log n}$ buckets contain most of the total weight in $\S$. By applying Lemma~\ref{wss_grs} to these top buckets and Lemma~\ref{wss_naive} to the remaining elements (whose weights are bounded by $\bigo{W_\S/n^2}$), we effectively reduce the problem size to $\bigo{\log n}$. The main technical challenges are managing the exponential growth of bucket weights and handling dynamic changes in the top $\bigo{\log n}$ buckets.
        
        Let us first see some definitions before giving formal description of our size reduction method.

    \header{\bf Buckets and chunks.} Given the input set of elements, we regard these $n$ elements as ``zeroth-level'' elements. 
        For a zeroth-level element, if its weight falls within the interval $(b^{j},b^{j+1}]$, where $j$ is an integer and $b\geq 2$ is a fixed integer, we place it into bucket $B_j$. We say that a bucket $B_j$ is non-empty if it contains at least one element.
        We denote by $w(B_j)$ the total weight of elements in the bucket $B_j$, i.e., $w(B_j)=W_{\S\cap B_j}=\sum_{v\in \S\cap B_j}w(v)$.
        For a bucket $B$ that is empty, $w(B)$ is trivially defined as 0.
        Next, for a given bucket $B_j$, we say that it is in the chunk $C_t$ if and only if $j\in[t\lc\logb n\rc,\ldots,(t+1)\lc\logb n\rc-1]$. This is essentially assigning a slowly-changing boundary for each bucket according to its index\textemdash the bucket-chunk relation changes only when $n$ doubles or halves.
        We say that a chunk is non-empty if and only if it has at least one non-empty bucket. 
        We give an example in Appendix~\ref{apd:example} to help understand the concept of buckets and chunks.
        For a given non-empty chunk $C_t$ and a bucket $B_j$ in $C_{t}$, since we know that $j\in[t\lc\logb n\rc,\ldots,(t+1)\lc\logb n\rc-1]$, then we have that:
        $$1\cdot b^{t\lc\logb n\rc}< w(B_j)\leq n\cdot b^{(t+1)\lc\logb n\rc}.$$

        To explain, each non-empty chunk includes at least one non-empty bucket with at least one element whose weight is larger than $b^{t\lc\logb n\rc}$; each non-empty chunk includes at most $n$ elements and the weight of each element is at most $b^{(t+1)\lc\logb n\rc}$.
        
        Thus we can further normalize the weights of all buckets in a non-empty chunk $C_t$ by dividing them by $b^{t\lc\logb n\rc}$. After that, the weight of each bucket $B_j$ is normalized to be $w^\prime(B_j)=w(B_j)\cdot b^{-t\lc\logb n\rc}$. We now have:
        $$1< w^\prime(B_j)\leq n\cdot b^{\lc\logb n\rc}\leq bn^2.$$
        We denote the set of bucket IDs in $C_t$ as $\S_{C_t}$. For each $i\in\S_{C_t}$, we define $w_{C_t}(i)=w^\prime(B_i)$. Then we can define a PPS problem induced by $C_t$ as $\Phi_{C_t}=\left<\S_{C_t},w_{C_t},1\right>$, where each element corresponds to the ID of a bucket in $C_t$. When the context is clear, we will simply use $\Phi_C$ to denote the PPS problem instance defined on chunk $C$.
        The weight of a chunk $C_t$ before (resp., after) normalization is denoted as $w(C_t)$ (resp., $w^\prime(C_t))$, which is the total weights of buckets in $C_t$ before (resp. after) normalization, i.e., $w(C_t)=\sum_{i\in \S_{C_t}} w(B_i)$ (resp. $w^\prime(C_t)=\sum_{i\in \S_{C_t}} w'(B_i)$). 
                
        Let $C_r$ be the nonempty chunk with the highest index, such that for all other nonempty chunks $C_i$, we have $i < r$.  We call $C_r$, $C_{r-1}$ and $C_{r-2}$ \emph{significant chunks}. We denote the union of all significant chunks as $C^*=C_r\cup C_{r-1}\cup C_{r-2}$. The other chunks are called \emph{non-significant chunks}. 
        We claim that the weight of an element in the non-significant chunks is at most $\bigo{\frac{1}{n^2}}$ of the total weight of all the elements. This is because:
        $$
        \frac{b^{(r-3+1)\lc\logb n\rc-1}}{b^{r\lc\logb n\rc}}\leq b^{-2\logb n-1}=\frac{1}{bn^2}.
        $$

        Having defined these terms, we now prove the following lemma.
        \begin{lemma}[size reduction] \label{wss_sr}
        Given a \pps~problem instance $\Phi=\left<\S, w, c\right>$, 
            assume that using method $\mathcal{M}$, for each non-empty chunk $C$, after linear initialization time, we can create a data structure $\widetilde{\D}(C)$ of linear size for $\Phi_C$ with $\bigo{1}$ $\que$ time and $\bigone$ expected $\upd$ time. Then for the original problem $\Phi$, after $\bigon$ initialization time, we can create a data structure $\mD$ of size $\bigo{n}$ with $\bigo{1}$ expected $\que$ time and $\bigone$ expected $\upd$, $\ins$, and $\del$ time.
        \end{lemma}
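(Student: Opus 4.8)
The plan is to sample the significant and non-significant regions separately and take their union. Since \pps~draws each element independently and the element sets $\mathcal{S}^{*}=\mathcal{S}\cap C^{*}$ and $\mathcal{S}^{-}=\mathcal{S}\setminus C^{*}$ are disjoint, a correct independent sampler for each region yields a correct sampler for $\Phi$ by union. First I would dispose of $\mathcal{S}^{-}$: every element there has weight at most $\bigo{W_\mathcal{S}/n^2}$, and since $|\mathcal{S}^-|\le n$ its total weight satisfies $W_{\mathcal{S}^-}/W_\mathcal{S}=\bigo{1/n}$, so Lemma~\ref{wss_naive} applies verbatim and handles this region with $\bigone$ expected query time and $\bigone$ updates. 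The entire difficulty is concentrated in $\mathcal{S}^{*}$, which consists of the three significant chunks $C_r,C_{r-1},C_{r-2}$, each containing only $\bigo{\log n}$ non-empty buckets.

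For a significant chunk $C$ I would store (i) a bounded-weight-ratio structure from Lemma~\ref{wss_grs} for every bucket $B_i$ (legitimate since all weights in a bucket lie in $(b^i,b^{i+1}]$), and (ii) the recursive structure $\widetilde{\mathcal{D}}(C)$ supplied by $\mathcal{M}$ for the induced instance $\Phi_C$ over the $\bigo{\log n}$ bucket IDs. Alongside I maintain the running aggregates $w(B_i)$, $w(C)$ and the global $W_\mathcal{S}$, so that at query time $s_i=c\,w(B_i)/W_\mathcal{S}$ and $c_C=c\,w(C)/W_\mathcal{S}\le 1$ are available in $\bigone$. The query proceeds top-down: I use $\widetilde{\mathcal{D}}(C)$, rescaled by the chunk factor $c_C$, to draw a set of \emph{candidate} buckets in which bucket $i$ appears independently with probability $s_i=\sum_{v\in B_i}c\,w(v)/W_\mathcal{S}$, and then, for each candidate bucket only, I invoke its Lemma~\ref{wss_grs} sampler to emit the elements of $B_i$. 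The crucial accounting is that the expected total sample size is $\sum_{v\in\mathcal{S}}c\,w(v)/W_\mathcal{S}=c\le 1$, and likewise $\sum_i s_i=c\le 1$ per instance, so only $\bigone$ candidate buckets arise in expectation and the whole query runs in $\bigone$ expected time provided each candidate is processed in $\bigone$.

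The main obstacle is correctness, not running time. Selecting bucket $i$ with probability $s_i$ \emph{overestimates} the true probability $q_i=1-\prod_{v\in B_i}(1-c\,w(v)/W_\mathcal{S})$ that $B_i$ contributes at least one element (by the union bound $q_i\le s_i$); and, more subtly, naively sampling the elements of a selected bucket independently would make the co-occurrence of two elements of the same bucket positively correlated rather than independent, violating the \pps~law. I would resolve both issues with rejection sampling keyed to the bucket's bounded-ratio sampler: conditioned on $i$ being a candidate, I draw a \pps~sample of $B_i$ and apply an accept/resample rule whose net effect is exactly that bucket $i$ is non-empty with probability $q_i$ and that, conditioned on non-emptiness, its elements follow the \pps~restriction to $B_i$. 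Verifying that this rule simultaneously (a) corrects the $s_i\to q_i$ gap, (b) restores within-bucket independence, and (c) terminates in $\bigone$ expected steps is the heart of the argument, and it is precisely where \pps~departs from subset sampling.

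Finally, for the dynamic operations I would keep $W_\mathcal{S}$, the per-bucket and per-chunk weight sums, and the index $r$ of the top non-empty chunk as explicitly maintained values, updating the affected bucket's Lemma~\ref{wss_grs} structure and the chunk's $\widetilde{\mathcal{D}}(C)$ in $\bigone$ per operation; bucket and chunk indices are read from $\lfloor\log_b w(v)\rfloor$ and $\lceil\log_b n\rceil$ in $\bigone$. Two points need care: the bucket–chunk boundaries depend on $\lceil\log_b n\rceil$ and must be rebuilt when $n$ changes by a constant factor, which I would charge to an amortized $\bigone$ rebuild; and an update that shifts $r$ migrates whole chunks between the significant regime (Lemma~\ref{wss_grs} plus recursion) and the non-significant regime (Lemma~\ref{wss_naive}), which I expect to be the second obstacle and would absorb by lazily retaining structures for recently significant chunks and amortizing the migrations. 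Granting the recursive hypothesis on $\mathcal{M}$, this yields $\bigon$ space and $\bigone$ expected \que, \upd, \ins, and \del~time for $\Phi$.
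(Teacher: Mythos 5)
Your high-level architecture coincides with the paper's: dispose of the non-significant chunks with Lemma~\ref{wss_naive}, keep a Lemma~\ref{wss_grs} structure per bucket and a recursive structure $\widetilde{\D}(C)$ per chunk, maintain the weight aggregates $w(B_j)$, $w(C_t)$, $W_\S$, and identify the significant chunks at query time. The genuine gap sits exactly where you place "the heart of the argument": the accept/resample rule that is supposed to turn the bucket-selection overestimate $s_i=\sum_{v\in B_i} c\,w(v)/W_\S$ into the non-emptiness probability $q_i=1-\prod_{v\in B_i}\left(1-c\,w(v)/W_\S\right)$, while simultaneously producing the correct conditional law inside the bucket, is never exhibited --- it is only postulated to exist. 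This is not a routine detail one can defer: implementing it requires $q_i$ (a product over all elements of a bucket, which changes with every update and with every change of $W_\S$) to be maintainable in $\bigone$ time, plus a within-bucket sampler conditioned on non-emptiness; neither is supplied by Lemma~\ref{wss_grs}, and constructing them is comparable in difficulty to the lemma you are proving. As written, the proposal reduces the lemma to an unproven sub-claim.

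The paper takes a different route that avoids this machinery altogether. The recursive instance $\Phi_{C}=\left<\S_{C},w_{C},1\right>$ selects bucket $j$ with probability $w(B_j)/w(C_i)$ --- proportional to the bucket's weight inside the chunk, not to $s_j$ or $q_j$; for each selected bucket, $D(B_j)$ emits each element with probability $c\,w(v)/w(B_j)$; and each emitted element is accepted with probability $w(C_i)/W_\S$. The three factors telescope to $c\,w(v)/W_\S$, so no non-emptiness probabilities, no conditional sampler, and no resampling loop are needed, and every quantity involved is an additively maintained weight sum. (Be aware, though, that the paper verifies only this per-element marginal; your observation that gating a whole bucket behind a single selection event correlates the inclusions of its elements is a legitimate concern that the marginal computation alone does not dispel, and you would need to address it explicitly whichever formulation you keep.) One further simplification worth adopting: the index $r$ of the top non-empty chunk is recomputed at query time from $W_\S$, since it must lie in $\{x-2,x-1,x\}$ for $x=\lc\logb W_\S\rc/\lc\logb n\rc$. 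Hence no structure ever "migrates" between the significant and non-significant regimes, your lazy-retention and amortization argument is unnecessary, and the only rebuilds are those triggered when $|\S|$ doubles or halves since the last construction.
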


        \begin{algorithm}[t]
        \caption{Init and Query operation of Lemma~\ref{wss_sr}}
        \label{alg:sr_init_q}
        \DontPrintSemicolon
        
        \SetKwProg{Init}{init$(\S, w)$}{:}{}
        \Init{} {
            $OldSize \gets |\S|$; $Size \gets |\S|$; $W_\S \gets 0$\;
            Initialize empty buckets $B_j$ for all possible $j$\;
            \ForEach{$v \in \S$}{
                $j \gets \lfloor \log_b w(v) \rfloor$\;
                Add $v$ to bucket $B_j$\;
                $w(B_j) \gets w(B_j) + w(v)$; $W_\S \gets W_\S + w(v)$\;
            }
            \ForEach{non-empty bucket $B_j$}{
                $t \gets \lfloor j/\lceil\log_b n\rceil \rfloor$\;
                Add $B_j$ to chunk $C_t$\;
                $w(C_t) \gets w(C_t) + w(B_j)$\; $w'(B_j) \gets w(B_j) \cdot b^{-t\lceil\log_b n\rceil}$\;
                $w'(C_t) \gets w'(C_t) + w'(B_j)$\;
                Create $D(B_j)$ using Lemma \ref{wss_grs}\;
            }
            \ForEach{non-empty chunk $C_t$}{
                Create $\widetilde{D}(C_t)$ using method $\mathcal{M}$\;
            }
        }

        \SetKwProg{Que}{query$()$}{:}{}
        \Que{} {
            $X \gets \emptyset$; $r \gets \lfloor \log_b W_\S/\lceil\log_b n\rceil \rfloor$\;
            \For{$i \gets r-2$ \KwTo $r$}{
                \If{$C_i$ is non-empty}{
                    $Y_i \gets$ query $\widetilde{D}(C_i)$\;
                    \ForEach{$j \in Y_i$}{
                        $Z \gets$ query $D(B_j)$\;
                        \ForEach{$v \in Z$}{
                            $u \gets \unif{0}{1}$\;
                            \If{$u < \frac{w(C_i)}{W_\S}$}{
                                $X \gets X \cup \{v\}$\;
                            }
                        }
                    }
                }
            }
            Sample from remaining chunks using Lemma \ref{wss_naive}\;
            \Return{$X$}\;
        }
        \end{algorithm}

        \begin{proof}
            The data structure, along with the query and update algorithms, are explained individually as follows.
        
            \header{\bf Structure.}
                We first scan $\S$ to put each element into its corresponding bucket. We then scan again to put each bucket into its corresponding chunk.
                We utilize method $\mathcal{M}$ to construct a structure $\widetilde{D}(C)$ for each non-empty chunk $C$.
                Additionally, we employ Lemma~\ref{wss_grs} to build a structure $D(B)$ for each bucket $B$. We also store several pieces of information as part of the data structure. This includes 
                the weights $w(C_t)$ and $w^\prime(C_t)$ for each non-empty chunk $C_t$, and the total weight $W_\S$. We also keep track of the initial size of $\S$ when initializing the data structure as $OldSize$, and the current size of $\S$ as $Size$. Given that the structures for each subproblem require linear space and initialization time, the total space and initialization time for the entire problem is $\bigo{n}$.
            
            \header{\bf Query.}
                To answer a query, we first find the largest ID of non-empty chunk $r$. Let $x=\frac{\lc\logb W\rc}{\lc\logb n\rc}$. It is easy to verify that the following should hold: $x-2\leq r\leq x$; otherwise, it is impossible for the total weight of the $n$ elements to be $W$. Then, we can easily check if $C_x$, $C_{x-1}$, $C_{x-2}$ is empty in order to identify the largest ID of non-empty chunk.
                Next, for each $i\in\{r-2,r-1,r\}$,
                we sample a subset $Y_i$ from the structure $\widetilde{D}(C_i)$, which takes $\bigone$ time according to the property of $\mathcal{M}$. 
                We further utilize the structure $D(B_j)$ to sample elements from bucket $B_j$ for each ID $j$ in $Y_i$. For each sampled element, we draw a random number $u$ from the uniform distribution $\unif{0}{1}$, if $u<\frac{w(C_i)}{W_\S}$, we finally add this element to the set $X$ to be returned. This step also takes $\bigone$ time according to Lem~\ref{wss_grs}. 
                Next, for the elements in the remaining chunks, since their weights satisfy the condition of Lemma \ref{wss_naive}, we can apply Lemma \ref{wss_naive} to sample a subset from it, which also takes $\bigone$ time according to Lem~\ref{wss_naive}. 
                
                Under this sampling scheme, each element $v\in B^v\subseteq\C_i$ for $r-2\leq i\leq r$ is sampled into $X$ with probability 
                \begin{align*}
                    &\frac{w(C_i)}{W_\S}\cdot\frac{w^{\prime}(B^v)}{w^{\prime}(C_i)}\cdot\frac{c\cdot w(v)}{w(B^v)}=\frac{w(C_i)}{W_\S}\cdot\frac{w(B^v)}{w(C_i)}\cdot\frac{c\cdot w(v)}{w(B^v)}=\frac{c\cdot w(v)}{W_\S}.
                \end{align*}
                Each element $v\in\S\setminus C_{sig}$ is sampled into $X$ with probability $\frac{c\cdot w(v)}{W_\S}$ according to Lemma~\ref{wss_naive}.
                The expected time complexity is: $\sum_{i=r-2}^{r}\bigo{1}+\bigo{1}=\bigo{1}.$

        \header{\bf Update.}
                Please refer to Appendix~\ref{apd:sr_update} for detailed description of the update operations.
        \end{proof}

    \subsection{Table Lookup and Final Result}\label{sec:table} 
        After applying size reduction twice, the problem reduces to solve a PPS instance $\Phi^o= \left< \S^o, w^o \right>$, where $\norm{\S^o}=\bigo{\logb \logb n}$ and $w^o(v)\leq b\lc\logb n\rc^2$ for all $v\in\S$.
        In this subsection, we propose a table lookup solution for the \pps~problem where the weights have an upper bound that grows exponentially with the problem size.

        \begin{lemma}[table lookup]\label{wss_lookup}
        Suppose $\S^o=\setht{1,2}{m}$ and for all $v\in\S^o$, $w^o(v)$ is greater than $1$ and is bounded by $b^{dm}$, where $d$ is a constant. Then for the \pps~problem instance $\Phi^o= \left< \S^o, w^o \right>$, we can maintain a data structure $\mD^o$ of size $m^{\bigo{m^2}}$ with $\bigone$ expected $\que$ time. Furthermore, $\mD^o$ can be maintained with $\bigone$ $\upd$ time.
        \end{lemma}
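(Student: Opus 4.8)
The plan is to precompute a single \emph{universal} lookup table, shared by all instances of size at most $m$, and to recover the exact \pps~distribution from a discretized one by per-element rejection sampling. The first step is a discretization that defeats the weight-explosion issue: since every weight satisfies $1 < w^o(v) \le b^{dm}$, I round each weight \emph{down} to the nearest power of $(1+\epsilon)$ for a fixed constant $\epsilon$ (say $\epsilon = 1$), obtaining $\hat{w}^o(v)$ with $\hat{w}^o(v) \le w^o(v) < (1+\epsilon)\hat{w}^o(v)$. The exponent $\lfloor\log_{1+\epsilon} w^o(v)\rfloor$ ranges only over $\{0,1,\dots,\lfloor\log_{1+\epsilon} b^{dm}\rfloor\}$, and $\log_{1+\epsilon} b^{dm}=\bigo{m}$, so there are $L=\bigo{m}$ levels per element and the number of distinct ordered level-vectors $(\ell_1,\dots,\ell_m)$ is $L^m=m^{\bigo{m}}\le m^{\bigo{m^2}}$. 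The table is indexed by this level-vector; since it depends only on $m$ and the grid and not on the actual weight values, one table serves every size-$m$ instance, and each instance keeps only an $\bigone$-word pointer into it.

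For each level-vector I precompute the ability to sample a subset in $\bigone$ time. Writing $\widehat{W}=\sum_u \hat{w}^o(u)$ for the discretized total weight (determined by the level-vector; here $c=1$ for the reduced instance), I define a \emph{dominating} independent-inclusion distribution in which each $v$ is included independently with probability $q(v)=\mathrm{min}\{1,\ (1+\epsilon)\,\hat{w}^o(v)/\widehat{W}\}$. This distribution ranges over the $2^m\le m^{\bigo{m}}$ subsets of $\S^o$, so for each level-vector I store an alias table over these subsets, of size $\bigo{2^m}$; a single alias draw then returns an entire subset $\widehat{X}$ in $\bigone$ time. The total space is $L^m\cdot\bigo{2^m}=m^{\bigo{m}}$, comfortably within the claimed $m^{\bigo{m^2}}$ budget.

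The correctness hinges on the correlation obstacle flagged in Section~\ref{sec:preliminaries}: one cannot overestimate all inclusion probabilities of a single PPS instance simultaneously, since they sum to $c$. I sidestep this by letting the dominating distribution be an independent-inclusion (subset-sampling) distribution whose marginals need \emph{not} sum to $c$. The round-down-and-boost choice guarantees domination: using $\hat{w}^o(v)\ge w^o(v)/(1+\epsilon)$ and $\widehat{W}\le W_{\S^o}$, one checks $q(v)\ge w^o(v)/W_{\S^o}=p(v)$. At query time I draw $\widehat{X}$ from the table and keep each $v\in\widehat{X}$ independently with probability $p(v)/q(v)\le 1$, computed on the fly from the exact maintained quantities; the resulting inclusion probability is $q(v)\cdot p(v)/q(v)=p(v)$, with full independence across elements. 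Since $\mathbb{E}|\widehat{X}|=\sum_v q(v)\le (1+\epsilon)=\bigo{1}$, each acceptance test is $\bigone$, and enumerating $\widehat{X}$ from its bitmask costs $\bigo{|\widehat{X}|}$ (one machine word, as $m=\bigo{\logb\logb n}$), the expected \que~time is $\bigone$.

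Finally, updates are $\bigone$ because the table is static: a $\upd$ operation recomputes the affected element's level in $\bigone$, maintains $W_{\S^o}$ and $\widehat{W}$ incrementally, and updates the mixed-radix table index by a single digit change, $(\ell'_i-\ell_i)L^{i}$, with the powers $L^{i}$ precomputed. The main obstacle I expect is precisely the correctness argument around the correlation, i.e.\ verifying that the round-down-and-boost dominating distribution both dominates $p(v)$ for every $v$ and keeps the expected rejection count constant (acceptance probability $p(v)/q(v)\ge (1+\epsilon)^{-2}$); the weight-explosion control and the $\bigone$-sampling are then routine given the $\bigo{m}$ levels and the precomputed alias tables.
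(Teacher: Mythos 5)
Your proposal is correct, and while it follows the same underlying strategy as the paper (discretize the weights, sample from a dominating independent-inclusion distribution stored in a table, then correct each element by rejection), every technical choice differs, and the comparison is instructive. The paper rounds each weight \emph{up} to an integer $\bar w(v)=\lceil w^o(v)\rceil$, indexes the table by the radix-$b^{dm}$ encoding of the full integer weight vector --- hence $b^{dm^2}=m^{\bigo{m^2}}$ indices --- and obtains domination by shrinking the denominator: since rounding up adds less than $1$ per element, $\overline{W}-m\le W$ and the marginal $\bar w(v)/(\overline{W}-m)$ exceeds $c\,w^o(v)/W$. Each index stores a flat array of $(\overline{W}-m)^m$ cells in which every subset appears with multiplicity proportional to its probability under that product distribution, so a uniform cell draw plays the role of your alias table. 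You instead round \emph{down} onto a geometric grid with $\bigo{m}$ levels per element, which collapses the index space to $m^{\bigo{m}}$, and obtain domination by inflating the numerator by the factor $(1+\epsilon)$; both mechanisms evade the correlation obstacle identically, by letting the dominating law be a subset-sampling distribution whose marginals need not sum to $c$, and both give $\mathbb{E}\lvert\widehat X\rvert=\bigo{1}$ and single-digit $\bigone$ index updates. What your route buys is an asymptotically smaller table, $m^{\bigo{m}}$ versus $m^{\bigo{m^2}}$ (both within the lemma's stated budget and both $\bigo{n}$ at $m=\bigo{\logb\logb n}$), at the cost of a slightly less elementary per-index structure (an alias table over $2^m$ subsets rather than a proportional-fill array); the paper's coarser integer rounding is what forces its quadratic exponent.
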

        If $m < b$, the above lemma holds trivially. For the remainder of the proof, we will consider the case where $m \geq b$.
        
        To build a lookup table that can correctly answer the query and has small size, we first discretize the weights while ensuring that the sampling probabilities are not scaled down. In the setting of PPS, this is not easily achievable. We explain with the following concrete example.
        \begin{example}
        Suppose $\S^o=\{1,2,3,4\}$, and the weight function is $w(1)=2.9$, $w(2)=7$, $w(3)=3.1$, $w(4)=4.7$. Then we have $\lc w(1)\rc=3$, $\lc w(2)\rc=7$, $\lc w(3)\rc=4$, $\lc w(4)\rc=5$. If we directly sample based on these scaled weights, the probability of $1$ being sampled would become $\frac{3}{3+7+4+5}=\frac{3}{19}$, which is smaller than $\frac{2.9}{2.9+7+3.1+4.7}=\frac{2.9}{17.7}$. Thus we can not directly sample based on the rounded-up weights.
        \end{example}
        To mitigate this problem, we maintain a lookup table with overestimated probabilities, where each element $v$ is sampled with probability $\frac{\lc w^o(v)\rc}{\sum_{v\in\S^o}\lc w^o(v)\rc-m}$ and then apply rejection sampling to correct for the overestimation. Notice that since $m\geq b\geq2$ and for each element $v$ we have $w(v)>1$, $\frac{\lc w^o(v)\rc}{\sum_{v\in\S^o}\lc w^o(v)\rc-m}\leq 1$ is guaranteed.
        
        Next, we are ready to give the detailed proof.
        \begin{proof} 
        The data structure, query operation, and update operation are as follows.
        
        \header{\bf Structure.}
        We define $\bar w(v)=\lc w^o(v)\rc$ for each $v\in\S^o$.
        We use a numeral system with radix $r\triangleq b^{dm}$ to encode $\bar w$, more specifically, $\bar w$ is encoded as:
        $$
        \lambda = \left( \bar w(m)  \bar w(m-1) \ldots \bar w(1) \right)_{r}.
        $$
        Then there are at most $r^m$ possible status of $\bar w$ as $\S^o$ undergoes \upd, i.e., each number in the numeral system with radix $r$ indicates a status of $\bar w$. We also maintain $W=\sum_{i=1}^m w(i)$ and $\overline{W} = \sum_{i=1}^m \lc w(i)\rc$.
        For a given $\bar w$ and a given subset $T$, we want the probability of sampling $T$ from the table to be 
        \begin{align*}
            \bar p(T)=\prod_{v\in T}\frac{\bar w(v)}{\overline{W}-m}\prod_{u\notin T}\frac{\overline{W}-m-\bar w(u)}{\overline{W}-m}\\=\frac{\prod_{v\in T}\bar w(v)\prod_{u\notin T}(\overline{W}-m-\bar w(u))}{(\overline{W}-m)^m}.
        \end{align*}
        
        Notice that for all subsets $T$ of $\S^o$, their probability sum up to $1$, i.e., $\sum_{T\subseteq \S^o} \bar p(T)=1$. Thus, we can create an array $A_\lambda$ of size $(\overline{W}-m)^m$, where each entry in the array can store a subset. Then we fill $(\overline{W}-m)^m \bar p(T)$ entries of $A_\lambda$ with $T$ for each subset $T$ of $\S^o$. To understand the table, we explain with the following example.
        \begin{example}
            Suppose $\S^o=\{1,2\}$, $b=2$, $d=2$, let us see what should be pre-computed for a specific status of $\bar {w}$, say $\bar w(1)=4,\bar w(2)=3$. First of all, the radix $r=2^{2\times2}=16$, then the $\bar {w}$ is encoded as $\lambda = (3\quad4)_{16}=3\times 16+4\times 1 =52$. We also have $\overline{W}=4+3=7$. Next we create an array $A_{52}$ of size $(\overline{W}-m)^m=(7-2)^2=25$. For subset $\emptyset\subset\S^o$, it is sampled with probability $\frac{7-2-4}{7-2}\times\frac{7-2-3}{7-2}=\frac{2}{25}$, so we fill the first $2$ entry of $A_{52}$ with $\emptyset$. For subset $\{1\}\subset\S^o$, it is sampled with probability $\frac{4}{7-2}\times\frac{7-2-3}{7-2}=\frac{8}{25}$, so we fill the next $8$  entry of $A_{52}$ with $\{1\}$. For subset $\{2\}\subset\S^o$, it is sampled with probability $\frac{7-2-4}{7-2}\times\frac{3}{7-2}=\frac{3}{25}$, so we fill the next $3$ entry of $A_{52}$ with $\{2\}$. For subset $\{1,2\}\subset\S^o$, it is sampled with probability $\frac{4}{7-2}\times\frac{3}{7-2}=\frac{12}{25}$, so we fill the next $12$ entry of $A_{52}$ with $\{1,2\}$. The case when $\bar{w}(1)$ and $\bar{w}(2)$ have other values can be handled by creating new arrays with similarly ways. 
        \end{example}
        As for each $\bar w$, we create an array $A_\lambda$ following the above process. The final space usage is bounded by $r^m\cdot(rm-m)^m\cdot m=m^{\bigo{m^2}}$.
        
        \header{\bf Query.}
        For query, we will look into the array $A_\lambda$ created for the $\bar w$ encoded by $\lambda$. We first uniformly sample an entry from $A_\lambda$, yielding a subset $T$. 
        Because $w(i)>1$, the expectation of the size of the $T$ is 
        $$
        \frac{\sum_{i=1}^m \bar w(i)}{\overline{W}-m}=\frac{\overline{W}}{\overline{W}-m}\leq\frac{2m}{2m-m}=2=\bigo{1}.
        $$
        For each element $v\in T$, we need to apply rejection sampling to correct for the overestimated sampling probability of $\frac{ \bar w(v)}{\overline{W}-m}$. Specifically, for each element $j\in T$, we draw a random number $u$ from $\unif{0}{1}$ and retain $j$ in the final sample if and only if $u<\frac{c\cdot w(v)}{\bar w(v)}\cdot\frac{\overline{W}-m}{W}$. In this way, each element is sampled with probability $\frac{ \bar w(v)}{\overline{W}-m}\cdot\frac{c\cdot w(v)}{\bar w(v)}\cdot\frac{\overline{W}-m}{W}=\frac{c\cdot w(v)}{W}$. Since the expectation of the size of $T$ is $\bigone$, and the rejection step cost $\bigone$ time, the query time is $\bigone$.

        \header{\bf Update.}
        For $\updtwo{v}{w}$ operation, it is equivalent to updating the $v$-th digit of the $r$-based number $\lambda$, which is a generalized \emph{bit operation}. Specifically, we update $\lambda$ to $$\lfloor\frac{\lambda}{r^v}\rfloor r^v+\lc w\rc r^{v-1}+\lambda\%r^{v-1},$$ which can be done in $\bigone$ time.
        \end{proof}

        \begin{algorithm}[t]
        \caption{Table Lookup}
        \label{alg:lookup}
        \DontPrintSemicolon
        
        \SetKwProg{Init}{init$(\S^o, w^o)$}{:}{}
        \Init{} {
            $r \gets b^{dm}$ \tcp*{Set radix for numeral system}
            $W \gets 0$; $\overline{W} \gets 0$; $\lambda \gets 0$\;
            \ForEach{$v \in \S^o$}{
                $\bar{w}(v) \gets \lceil w^o(v) \rceil$;
                $W \gets W + w^o(v)$;
                $\overline{W} \gets \overline{W} + \bar{w}(v)$\;
            }
            \For{$i \gets 1$ \KwTo $m$}{
                $\lambda \gets \lambda \cdot r + \bar{w}(i)$ \tcp*{Encode weights in base-r}
            }
            Create array $A_\lambda$ of size $(\overline{W}-m)^m$\;
            \ForEach{$T \subseteq \S^o$}{
                $p \gets \prod_{v\in T}\bar{w}(v)\prod_{u\notin T}(\overline{W}-m-\bar{w}(u))$\;
                Fill next $p$ entries of $A_\lambda$ with $T$\;
            }
        }

        \SetKwProg{Cw}{change\_w$(v, w_{new})$}{:}{}
        \Cw{} {
            $W \gets W - w^o(v) + w_{new}$;
            $\overline{W} \gets \overline{W} - \bar{w}(v) + \lceil w_{new} \rceil$\;
            $old\_digit \gets \bar{w}(v)$;
            $new\_digit \gets \lceil w_{new} \rceil$\;
            $\bar{w}(v) \gets new\_digit$\;
            \tcp{Update encoded number using bit operations}
            $\lambda \gets \lfloor\lambda/r^v\rfloor r^v + new\_digit \cdot r^{v-1} + \lambda\bmod r^{v-1}$\;
            $w^o(v) \gets w_{new}$\;
        }
        
        \SetKwProg{Que}{query$()$}{:}{}
        \Que{} {
            $X \gets \emptyset$\;
            $i \gets \unif{0}{(\overline{W}-m)^m-1}$\;
            $T \gets A_\lambda[i]$ \tcp*{Sample subset from lookup table}
            \ForEach{$v \in T$}{
                \If{$\unif{0}{1} < \frac{c\cdot w^o(v)}{\bar{w}(v)} \cdot \frac{\overline{W}-m}{W}$}{
                    $X \gets X \cup \{v\}$\;
                }
            }
            \Return{$X$}\;
        }
    \end{algorithm}

            After using Lemma \ref{wss_sr} twice, we only need to handle the PPS problem instance where the number of elements is $m=\lc\logb\lc\logb n\rc\rc$ and the weight function has a lower bound 1 and an upper bound $b(\lc\logb n\rc)^2\leq b(b^m)^2=b^{2m+1}\leq b^{3m}$. This satisfies the condition of Lem. ~\ref{wss_lookup}. Thus we can use the lookup table to end the recursion. the size of the table is 
            $$
            m^{\bigo{m^2}}=\bigo{b^{(\lc\logb\lc\logb n\rc\rc^2)(\logb\lc\logb\lc\logb n\rc\rc)}}=\bigo{b^{\logb n}}=\bigon.
            $$
            Therefore, we finally achieve at the following theorem by applying Lem.~\ref{wss_sr} twice and finally use Lem.~\ref{wss_lookup} to end the recursion.
            \begin{theorem}\label{sampling structure for WSS}
            Given a \emph{\pps} problem instance $\Phi=\left<\S, w, c\right>$, we can maintain a data structure $\mD$ of size $\bigon$ with $\bigo{1}$ $\que$ time. Furthermore, $\mD$ can be maintained with $\bigone$ expected $\upd$, $\ins$ and $\del$ time.
            \end{theorem}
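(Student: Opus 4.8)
The plan is to prove the theorem by composing the three building lemmas into a two-level recursion that bottoms out in a table lookup. Concretely, I would instantiate the abstract method $\mathcal{M}$ of the size-reduction reduction (Lemma~\ref{wss_sr}) with Lemma~\ref{wss_sr} applied one level deeper, and terminate the innermost level with the table lookup of Lemma~\ref{wss_lookup}. The first application of Lemma~\ref{wss_sr} turns the original instance on $n$ elements into the chunk-induced subinstances $\Phi_C$, each carrying $\bigo{\logb n}$ bucket-ID elements with normalized weights in $(1,bn^2]$. Applying Lemma~\ref{wss_sr} a second time to every such $\Phi_C$ reduces it to sub-chunk instances with $m=\lc\logb\lc\logb n\rc\rc$ elements and normalized weights bounded by $b(\lc\logb n\rc)^2$.

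The next step is to check that the base case is admissible and cheap. I would verify the chain $b(\lc\logb n\rc)^2\leq b(b^m)^2=b^{2m+1}\leq b^{3m}$, so the twice-reduced instance meets the hypothesis of Lemma~\ref{wss_lookup} with $d=3$; the recursion can then be closed by the lookup table. I would then confirm that a single such table occupies $m^{\bigo{m^2}}=\bigo{b^{\logb n}}=\bigon$ space by substituting $m=\lc\logb\lc\logb n\rc\rc$ into the exponent $m^2\logb m$ and checking it stays below $\logb n$. Together with the building blocks of Lemmas~\ref{wss_grs} and~\ref{wss_naive} invoked inside each reduction step, this fully specifies the data structure.

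With the recursion fixed, the stated bounds follow by finite composition. For queries, Lemma~\ref{wss_lookup} gives expected $\bigone$ time, and each of the two outer applications of Lemma~\ref{wss_sr} preserves an expected $\bigo{1}$ query guarantee---only a constant number of significant chunks are probed per level, while the subcritical elements are swept up by the brute-force scan of Lemma~\ref{wss_naive}---so three $\bigo{1}$ layers compose to expected $\bigo{1}$. The same telescoping handles maintenance: Lemma~\ref{wss_lookup} supports $\bigone$ $\upd$, and each reduction layer lifts an $\bigone$-expected inner $\upd$ to $\bigone$-expected $\upd$, $\ins$, and $\del$ for its own instance, giving $\bigone$ expected time for all update operations on $\mD$.

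The main obstacle I anticipate is the space accounting, which is precisely where the weight-explosion and correlation issues peculiar to \pps~ appear. The lookup table is \emph{super-linear} in its own element count, so attaching a table to every non-empty chunk at both levels would wreck the $\bigon$ bound; the argument must instead exploit that only a constant number of significant chunks per level carry the expensive recursive structure, and that the per-chunk normalization keeps every reduced weight polynomially bounded in the size of the instance being reduced---inside $(1,bn^2]$ after the first reduction---so that the final exponent $m^2\logb m$ indeed falls below $\logb n$ and each surviving table is only $\bigon$. Showing that these facts jointly keep the total space at $\bigon$, while simultaneously preserving the exact marginals $\frac{c\cdot w(v)}{W_\S}$ through the top-down conditional sampling that the inter-element correlation forces (with rejection sampling correcting the rounded-up weights at the leaves), is the delicate core of the proof.
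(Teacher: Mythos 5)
Your overall route is exactly the paper's: apply Lemma~\ref{wss_sr} twice, instantiating $\mathcal{M}$ at the first level with a second application of Lemma~\ref{wss_sr} and at the second level with Lemma~\ref{wss_lookup}; the bound check $b(\lc\logb n\rc)^2\leq b^{3m}$ with $m=\lc\logb\lc\logb n\rc\rc$ and the table-size computation $m^{\bigo{m^2}}=\bigo{b^{\logb n}}=\bigon$ are the same calculations the paper performs, and the composition of the $\bigone$ query and update guarantees across the constant number of levels is likewise identical.

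The one place you deviate is your proposed resolution of the space issue, and there the deviation would hurt you. You suggest that only the constant number of significant chunks should ``carry the expensive recursive structure.'' The paper instead builds $\widetilde{D}(C)$ for \emph{every} non-empty chunk at construction time, and this is not an accident: the set of significant chunks is determined by the largest non-empty chunk index $r$, which can shift after a single \upd, \ins, or \del. If the recursive structure existed only for the currently significant chunks, such a shift would force you to build $\widetilde{D}(C)$ for a newly significant chunk on the fly, costing time linear in the number of its buckets --- $\Theta(\log n)$ at the first level --- and breaking the $\bigone$ expected update bound. The correct way to reconcile this with the super-linear table of Lemma~\ref{wss_lookup} is the observation (implicit in the paper) that the family of arrays $\{A_\lambda\}$ enumerates \emph{all} possible rounded-weight statuses for instances of size $m$ and radix $r=b^{dm}$, so it is a single global structure of total size $m^{\bigo{m^2}}=\bigon$ shared by every leaf instance; each individual chunk instance stores only its own encoding $\lambda$ together with $W$ and $\overline{W}$, which is linear in the instance size as required by the hypothesis of Lemma~\ref{wss_sr}. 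With that substitution your argument goes through; with your significant-chunks-only fix it would not.
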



\section{Experiments} \label{sec:exp}
We provide an experimental evaluation to compare the performance of our proposed DIPS with alternative approaches. All experiments are conducted on a Linux machine  with an Intel Xeon 2.3GHz CPU and 755GB of memory. The source code~\cite{code} for all methods is implemented in C++ and compiled with full optimization.

	\subsection{Experimental Settings}

        \header{\bf Competitors.}
        We evaluate our \dub~ against existing subset sampling methods (by reducing \pps~ to subset sampling): HeterogeneousSS~\cite{logn+u_ss} (referred to as \rhss), BringmannSS~\cite{1+u_ss} (referred to as \rbss), and ODSS~\cite{ODSS} (referred to as \rdss). 

        \header{\bf The distribution of weights.}
        We selected four types of distributions to represent the weights of the elements: the \emph{exponential distribution}, the \emph{normal distribution}, the \emph{half-normal distribution}, and the \emph{log-normal distribution}. The exponential distribution is a continuous probability distribution characterized by the \emph{probability density function} (PDF) $f(x)=\lambda e^{-\lambda x}$, where $\lambda$ is the rate parameter. The normal distribution, also known as the Gaussian distribution, is defined by the PDF $f(x) = \frac{1}{\sqrt{2\pi\sigma^2}} exp({-\frac{(x-\mu)^2}{2\sigma^2}})$, where $\mu$ is the mean and $\sigma^2$ is the variance. The half-normal distribution is derived from the absolute value of a zero-mean normal distribution. Its PDF is given by $f(x) = \frac{\sqrt{2}}{\sigma \sqrt{\pi}} exp({-\frac{x^2}{2\sigma^2}})$. The log-normal distribution describes a random variable whose logarithm is normally distributed. Its PDF is expressed as $f(x) = \frac{1}{x\sigma\sqrt{2\pi}} exp({-\frac{(\ln x - \mu)^2}{2\sigma^2}})$. 

        \header{\bf Parameter settings.}
        We set $c=1$ by default. For our method, \dub, we configured the parameter $b=4$. Following previous study~\cite{ODSS}, for the exponential distribution, we set $\lambda=1$; for the normal distribution, we used $\mu=0$ and $\sigma=\sqrt{10}$; the same value of $\sigma=\sqrt{10}$ was used for the half-normal distribution; for the log-normal distribution, we set $\mu=0$ and $\sigma=\sqrt{\ln 2}$. 

	\subsection{Correctness of Queries}
	To empirically verify the correctness of \dub~and its competitors, we conducted a series of experiments. According to the law of large numbers, the empirical probability of each element, derived from a substantial number of queries, should converge to its true probability, with accuracy improving as the number of queries increases. Following \cite{ODSS}, in our experiments, we calculate the empirical probability $\hat{p}(e)$ for each element $e$ and report the \emph{maximum absolute error}, defined as $\mathrm{max}_{e \in \mathcal{S}} |\hat{p}(e) - p(e)|$. Specifically, we set the number of elements to $n = 10^5$ and perform 1,000 update operations, consisting of 500 insertions followed by 500 deletions. Subsequently, we conduct \pps~ queries for $\{10^3, 10^4, 10^5, 10^6, 10^7\}$ iterations. This approach allows us to test both the update and query algorithms. Figure \ref{fig:err} illustrates the maximum absolute error across all methods for varying numbers of repetitions. Our results indicate that the maximum absolute error decreases as the number of repetitions increases and all methods follow identical trends, supporting the correctness of these methods.
	\begin{figure*}[t]
		\centering
		\begin{small}
			\begin{tabular}{cccc}
					\multicolumn{4}{c}{\hspace{-6mm} \includegraphics[height=3mm]{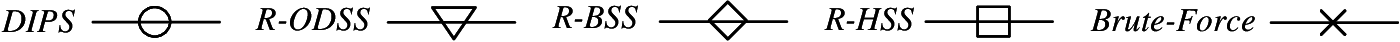}}  \\[0mm]
				\includegraphics[height=26mm]{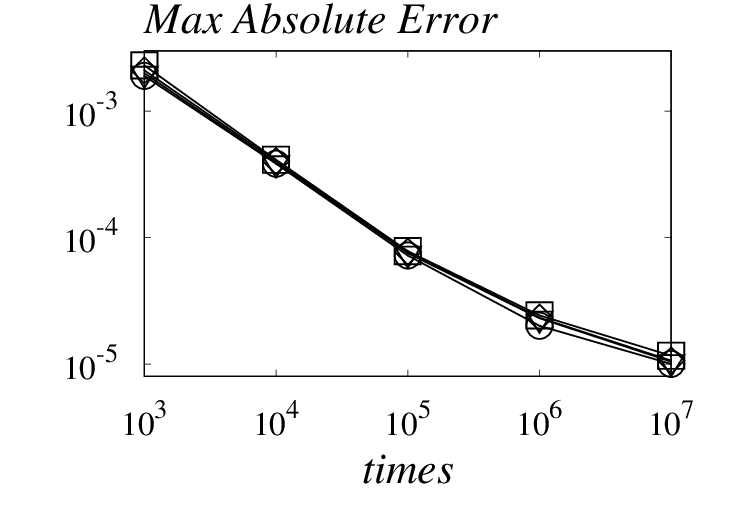} &
				\hspace{-2mm} \includegraphics[height=26mm]{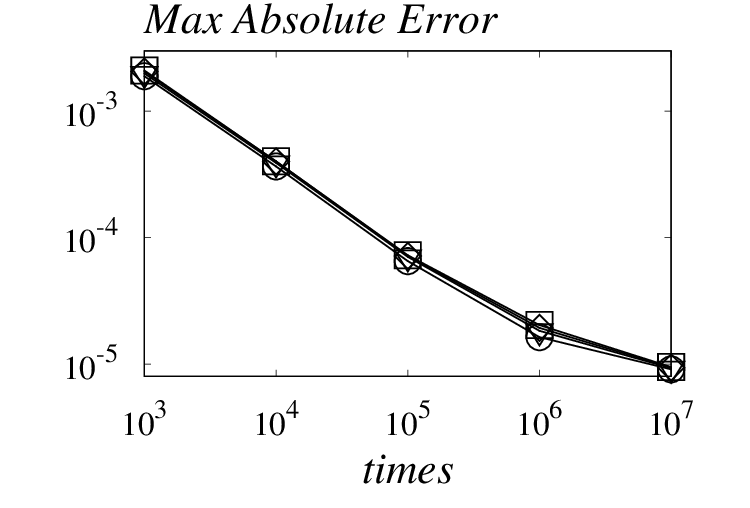} &
				\hspace{-2mm} \includegraphics[height=26mm]{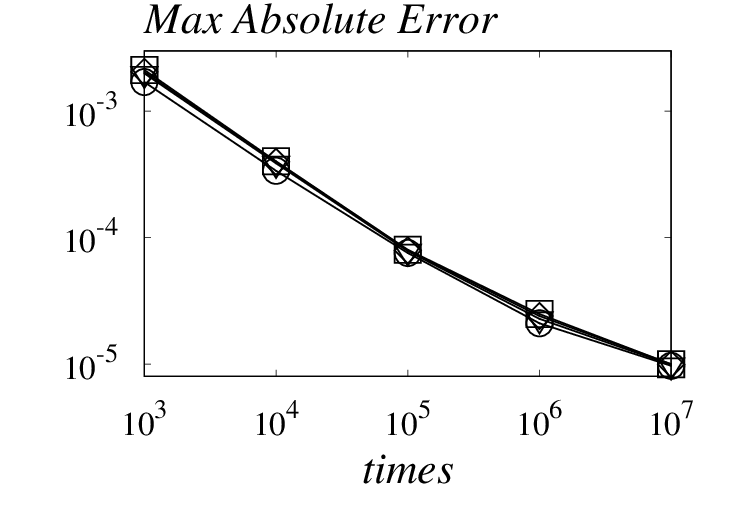}&
				\hspace{-2mm} \includegraphics[height=26mm]{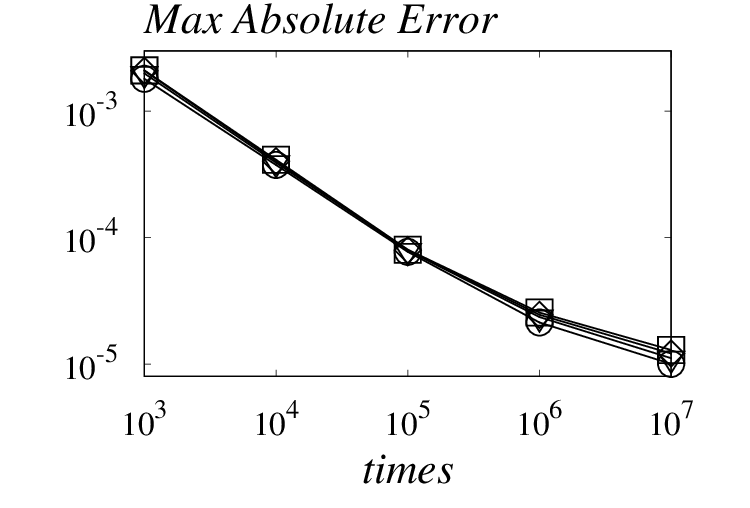}
				\\[-2mm]
				\hspace{-1mm} (a) Exponential distribution &
				\hspace{-1mm} (b) Normal distribution &
				\hspace{-1mm} (c) Half-normal distribution&
				\hspace{-1mm} (d) Log-normal distribution \\[-1mm]
			\end{tabular}
			\vspace{-2mm}
			\caption{Max absolute error v.s. repeat times on different distributions (in seconds). ($n=10^5$)} \label{fig:err}
			\vspace{-2mm}
		\end{small}
	\end{figure*}

	\begin{figure*}[t]
		\centering
		\begin{small}
			\begin{tabular}{cccc}
					\multicolumn{4}{c}{\hspace{-6mm} \includegraphics[height=2.3mm]{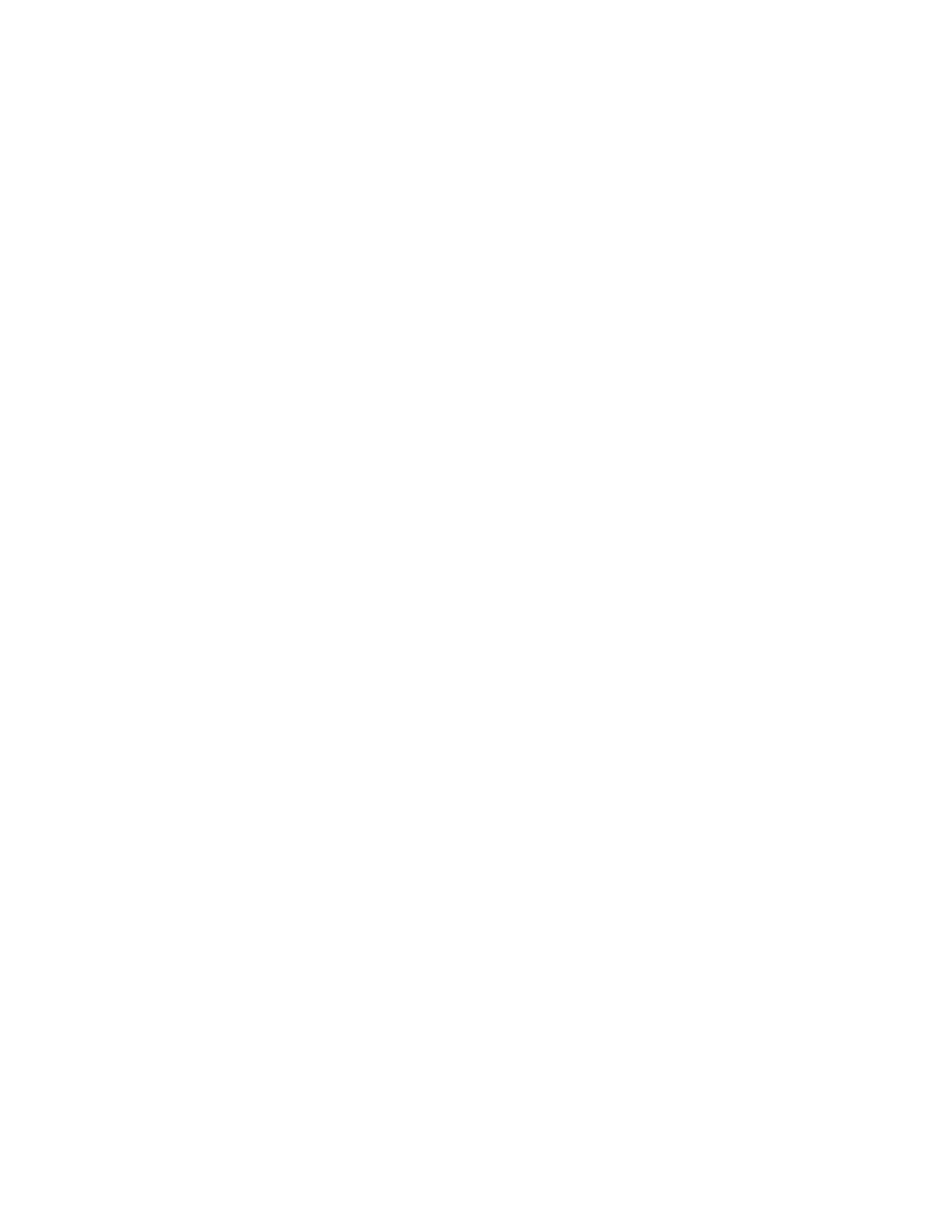}}  \\[-2mm]
				\hspace{-2mm} \includegraphics[height=26mm]{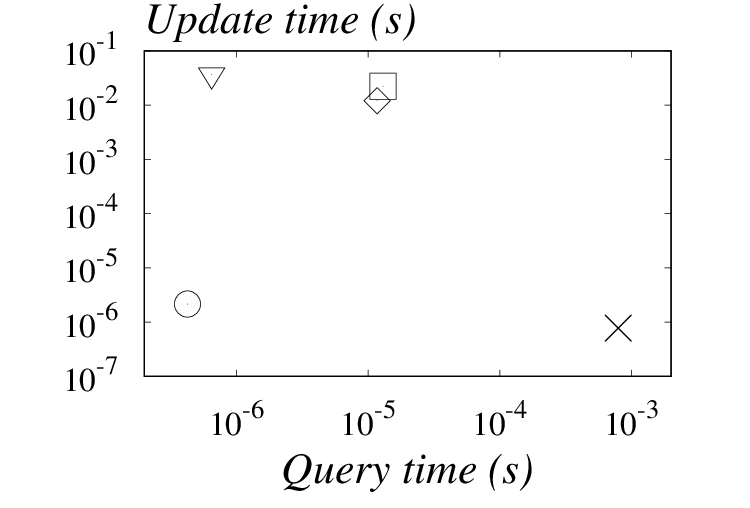} &
				\hspace{-2mm} \includegraphics[height=26mm]{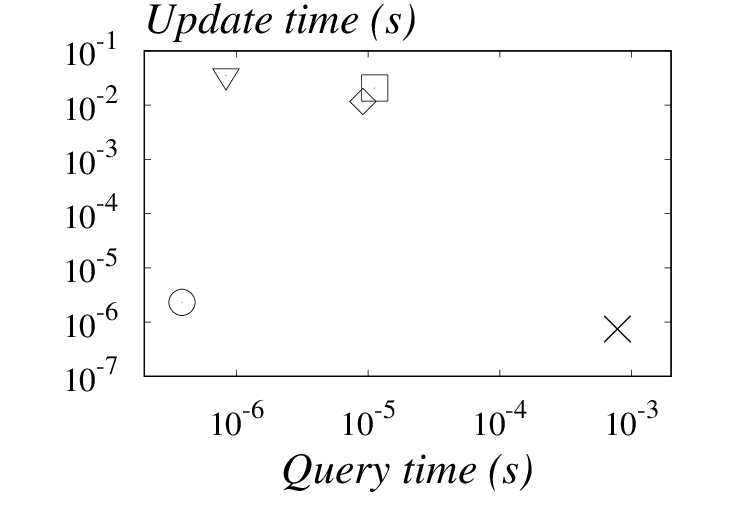} &
				\hspace{-2mm} \includegraphics[height=26mm]{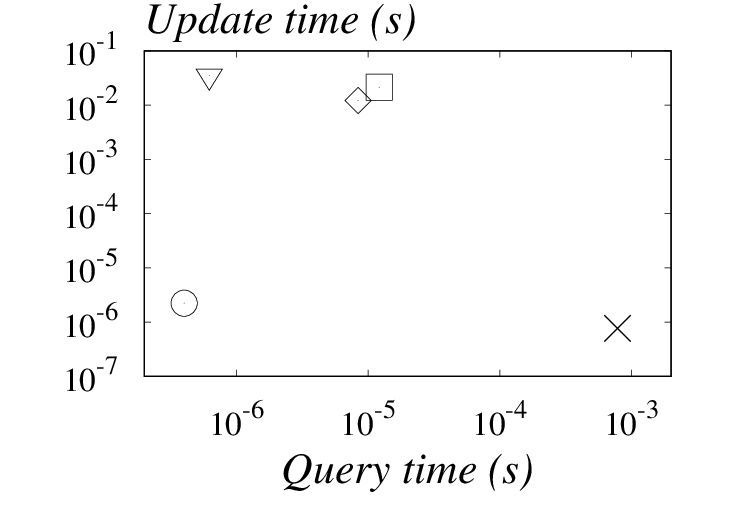}&
				\hspace{-2mm} \includegraphics[height=26mm]{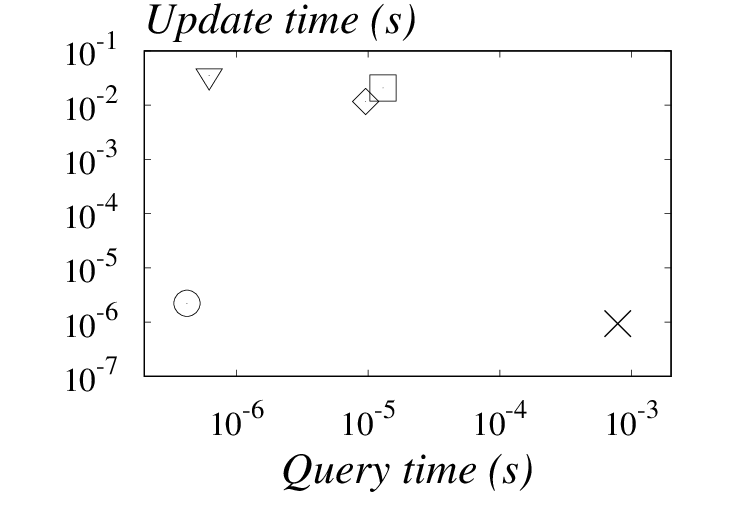}
				\\[-2mm]
				\hspace{-1mm} (a) Exponential distribution &
				\hspace{-1mm} (b) Normal distribution &
				\hspace{-1mm} (c) Half-normal distribution&
				\hspace{-1mm} (d) Log-normal distribution \\[-1mm]
			\end{tabular}
			\vspace{-2mm}
			\caption{Update time v.s. query time on different distributions (in seconds). ($n=10^5$)} \label{fig:qvsu}
			\vspace{-2mm}
		\end{small}
	\end{figure*}

	\subsection{Query and Update Efficiency Trade-off}
	Given the pressing needs to balance sample and update efficiency in the \pps~problem, we present trade-off plots between query time and update time for each method in Fig.~\ref{fig:qvsu}. To illustrate this trade-off more clearly, we include a brute-force method in our comparison. This brute-force method stores all elements and their weights in dynamic arrays and performs queries through a brute-force scan. Basically, this brute-force method has the lowest possible update cost, which only needs to dynamically maintain the set.

	We set the number of elements to $n=10^5$. As shown in  Fig.~\ref{fig:qvsu}, there is a noticeable trade-off between update efficiency and query efficiency among the competitor methods. Our method, \dub, achieves the best query efficiency among all methods while maintaining superb update efficiency, comparable even to the brute-force approach that dynamically maintains the set. Remarkably, \dips~ achieves a four orders of magnitude speed-up in index update efficiency compared to \rhss, \rbss, and \rdss. This confirms that \dips~ offers the best balance between query efficiency and index update efficiency, in line with our theoretical analysis showing that \dips~ achieves optimal expected query and update efficiency.


	\subsection{Query Efficiency}
	To further assess the query efficiency of these algorithms, we vary the number of elements, $n$, and report the query time in Fig.~\ref{fig:nvsq}. The values of $n$ tested are $\{10^6, 5\times10^6,10^7, 5\times10^7,10^8, 5\times10^8,10^9\}$. For each method, we report the average query time based on 10,000 queries. Notice that we omit the brute-force approach, which requires $\bigo{n}$ query time, in this set of experiment since it is too slow to finish these 10,000 queries. As shown in Fig.~\ref{fig:nvsq}, our method and \rdss~ demonstrate comparable query efficiency, matching their optimal expected query efficiency, and both outperform other competitors significantly. We also vary $c$ to 0.8, 0.6, and 0.4, and show the results in Fig.~\ref{fig:nvsq8}, Fig.~\ref{fig:nvsq6}, and Fig.~\ref{fig:nvsq4}.
	\begin{figure*}[t]
		\centering
		\begin{small}
			\begin{tabular}{cccc}
					\multicolumn{4}{c}{\hspace{-6mm} \includegraphics[height=2.3mm]{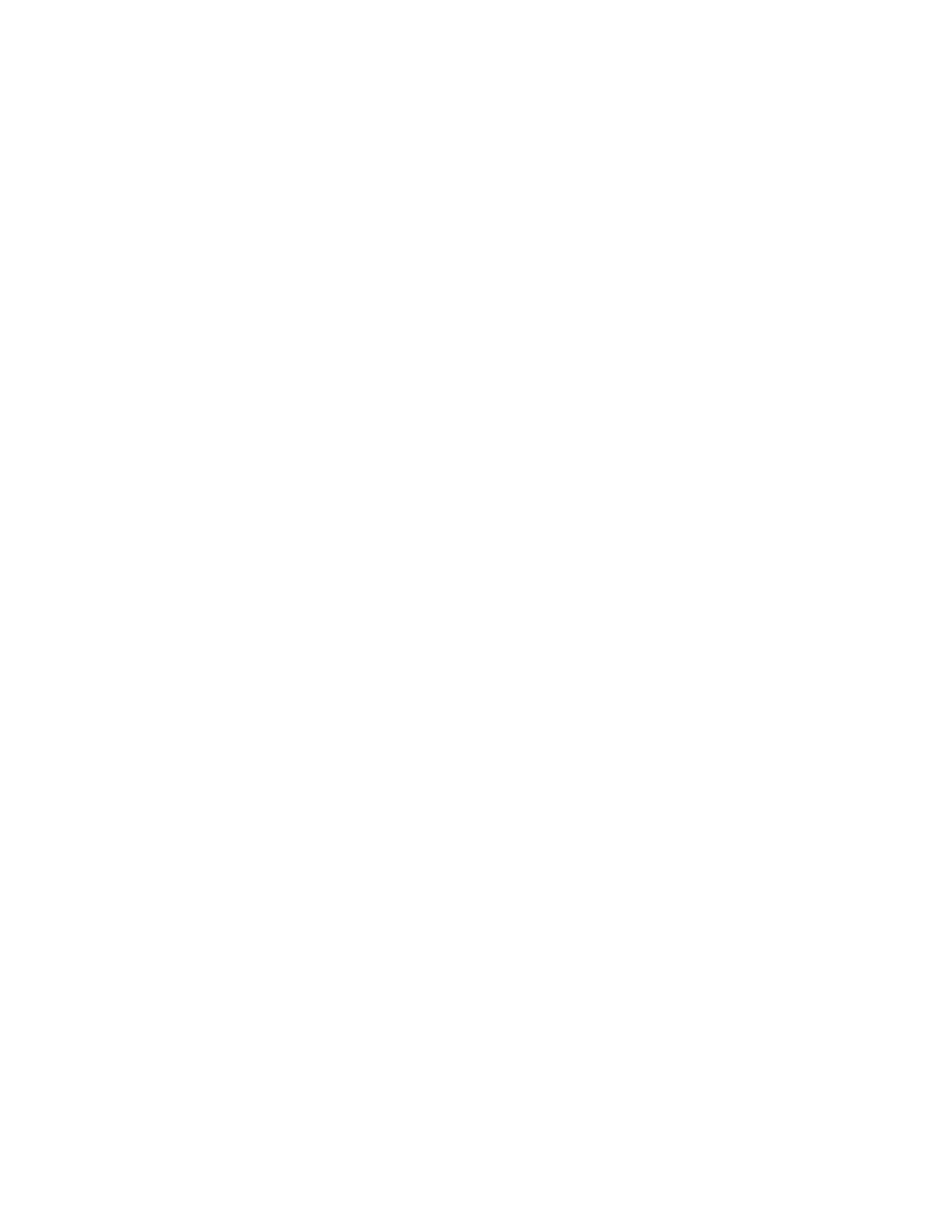}}  \\[-2mm]
				\hspace{-2mm} \includegraphics[height=26mm]{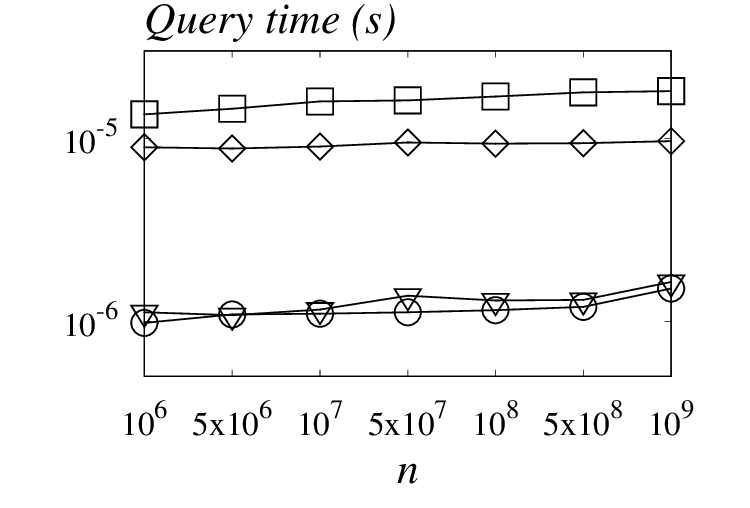} &
				\hspace{-2mm} \includegraphics[height=26mm]{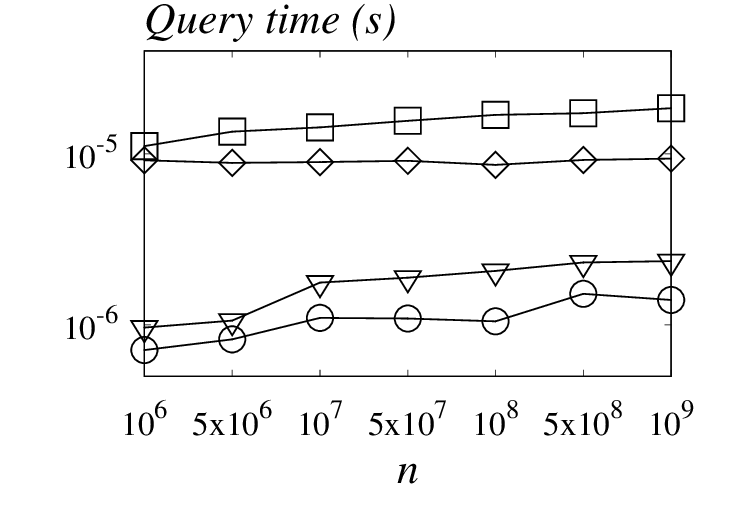} &
				\hspace{-2mm} \includegraphics[height=26mm]{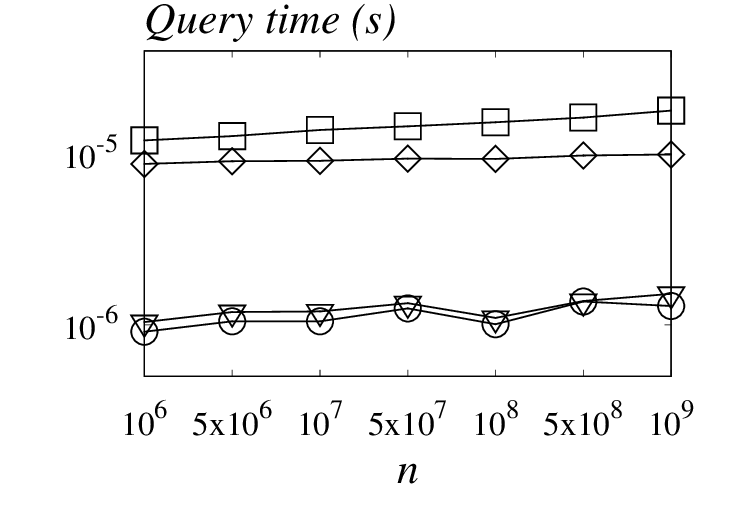}&
				\hspace{-2mm} \includegraphics[height=26mm]{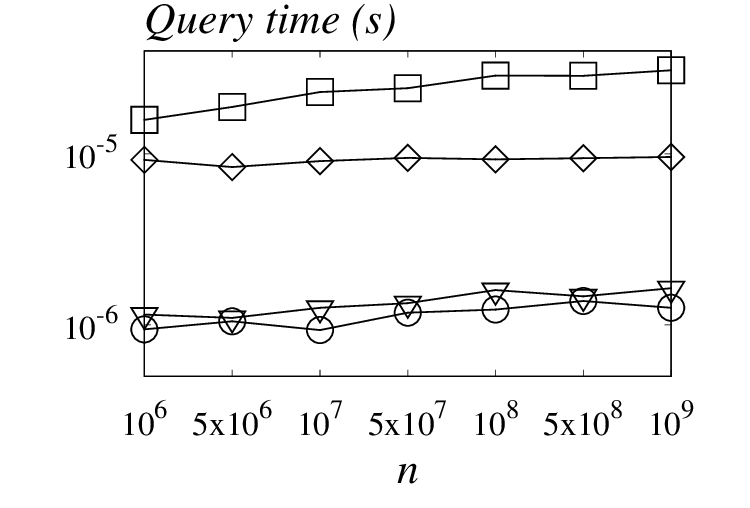}
				\\[-2mm]
				\hspace{-1mm} (a) Exponential distribution &
				\hspace{-1mm} (b) Normal distribution &
				\hspace{-1mm} (c) Half-normal distribution&
				\hspace{-1mm} (d) Log-normal distribution \\[-1mm]
			\end{tabular}
			\vspace{-2mm}
			\caption{Varying $n$: Query time (in seconds) on different distributions. (c=1)} \label{fig:nvsq}
			\vspace{-2mm}
		\end{small}
	\end{figure*}

	\begin{figure*}[t]
		\centering
		\begin{small}
			\begin{tabular}{cccc}
					\multicolumn{4}{c}{\hspace{-6mm} \includegraphics[height=2.3mm]{figure/nvsu_normal_legend.eps}}  \\[-2mm]
				\hspace{-2mm} \includegraphics[height=26mm]{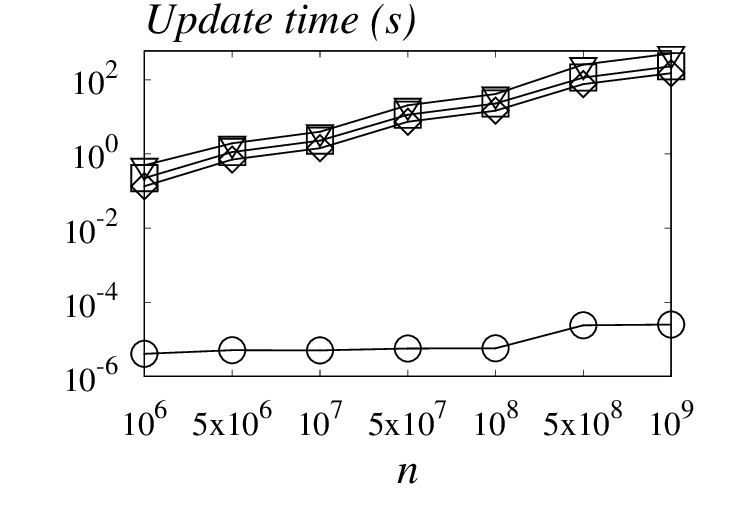} &
				\hspace{-2mm} \includegraphics[height=26mm]{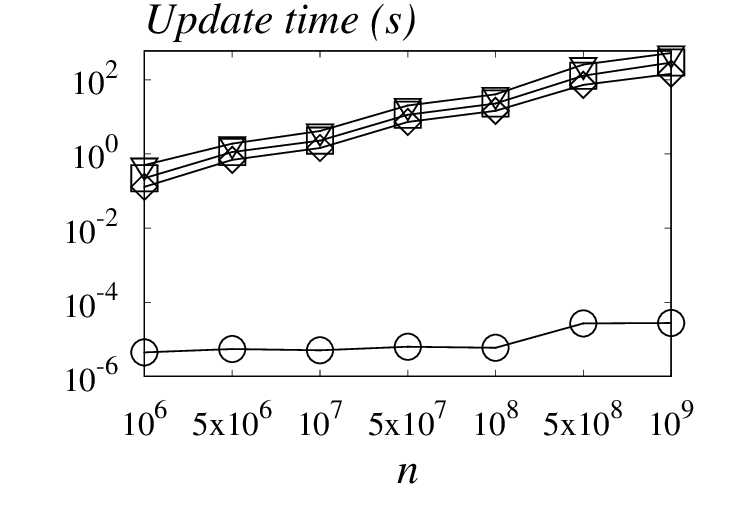} &
				\hspace{-2mm} \includegraphics[height=26mm]{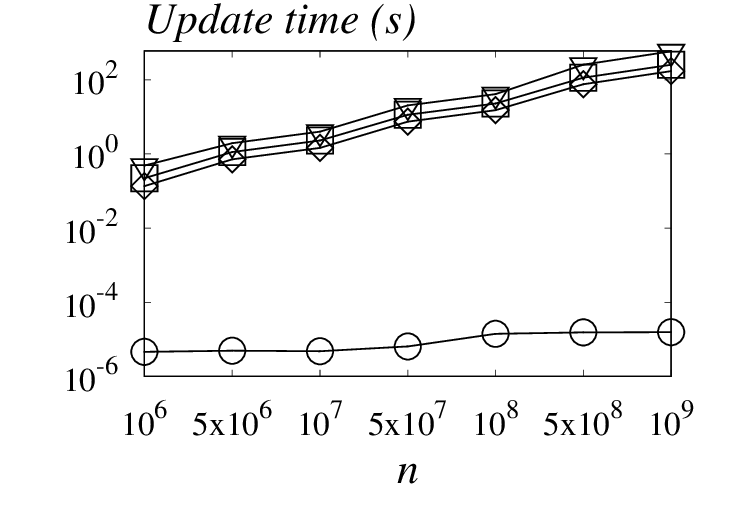}&
				\hspace{-2mm} \includegraphics[height=26mm]{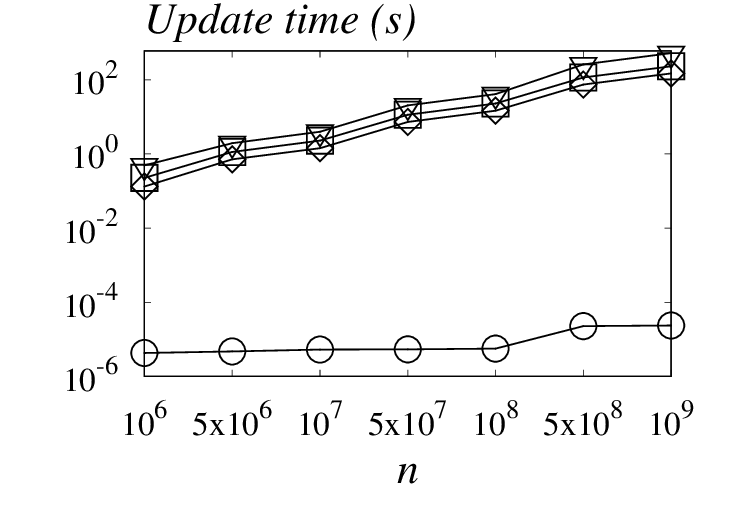}
				\\[-2mm]
				\hspace{-1mm} (a) Exponential distribution &
				\hspace{-1mm} (b) Normal distribution &
				\hspace{-1mm} (c) Half-normal distribution&
				\hspace{-1mm} (d) Log-normal distribution \\[-1mm]
			\end{tabular}
			\vspace{-2mm}
			\caption{Varying $n$: Update time (in seconds) on different distributions.} \label{fig:nvsu}
			\vspace{-2mm}
		\end{small}
	\end{figure*}

	\subsection{Update Efficiency}
	To evaluate the update efficiency of these algorithms, we vary the number $n$ of elements, using values from $\{10^6, 5\times10^6,10^7, 5\times10^7,10^8, 5\times10^8,10^9\}$, and presented the update times in Fig.~\ref{fig:nvsu}. For our method \dips, the average update time is reported based on 100,000 insertions and 100,000 deletions. During our experiments, we observe that the update time for competitor methods were prohibitively high, since the update cost is linear to $n$. However, the update time remained stable across different trials. Consequently, for competitor methods, we report the average update time based on 50 insertions and 50 deletions. As shown in Fig.~\ref{fig:nvsu}, our method, \dub, consistently outperforms its competitors by at least four orders of magnitude across all weight distributions. Moreover, as $n$ increases, the performance advantage of our method becomes more pronounced. This is because the competitors reply on reconstruction to handle insertions or deletions, while the update time of \dub~ remains $\bigone$ regardless of $n$.

	\begin{figure*}[t]
		\centering
		\begin{small}
			\begin{tabular}{cccc}
					\multicolumn{4}{c}{\hspace{-6mm} \includegraphics[height=3.1mm]{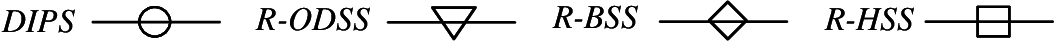}}  \\[0mm]
				\includegraphics[height=26mm]{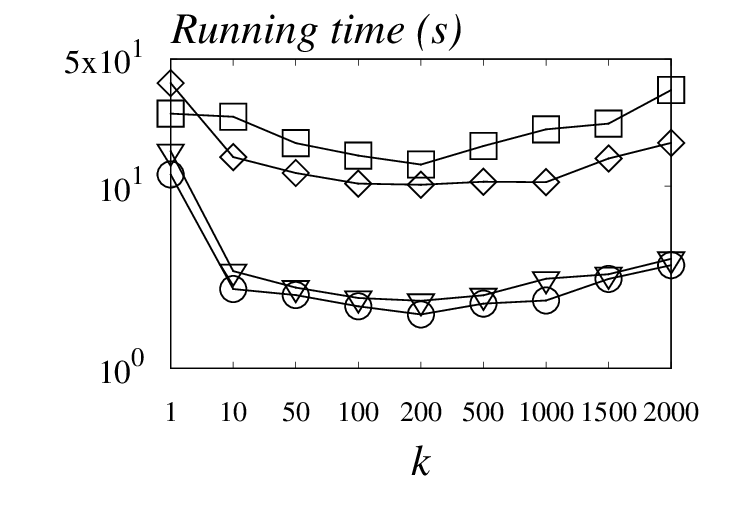} &
				\hspace{-2mm} \includegraphics[height=26mm]{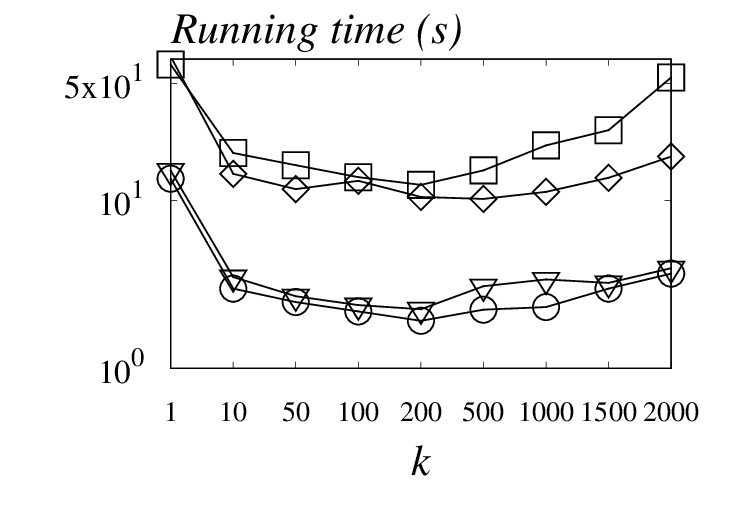}&
				\hspace{-2mm} \includegraphics[height=26mm]{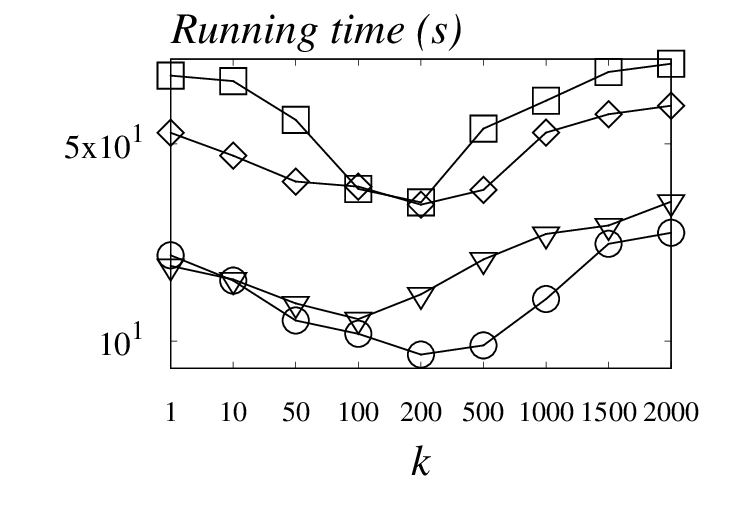} &
				\hspace{-2mm} \includegraphics[height=26mm]{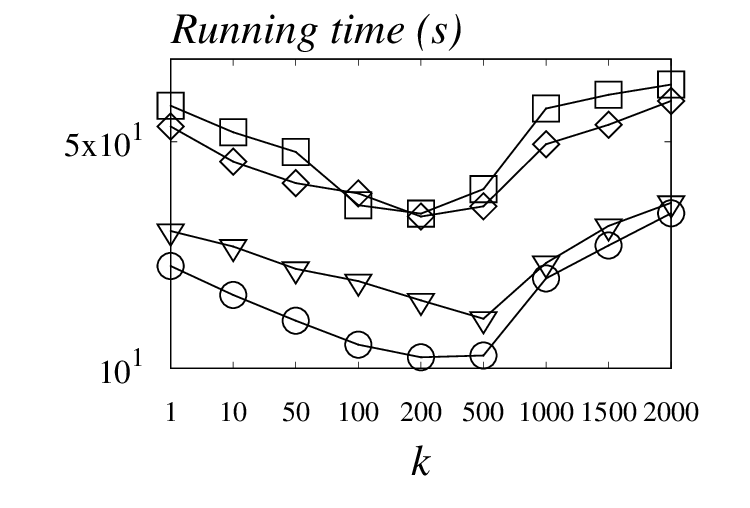}
				\\[-2mm]
				\hspace{-1mm} (a) OL - Exponential distribution &
				\hspace{-1mm} (b) OL - Weibull distribution &
				\hspace{-1mm} (c) TW - Exponential distribution &
				\hspace{-1mm} (d) TW - Weibull distribution \\[-1mm]
			\end{tabular}
			\vspace{-2mm}
			\caption{Running time of dynamic IM algorithm based on different \pps~indexes.} \label{fig:imtime}
			\vspace{-2mm}
		\end{small}
	\end{figure*}

	\begin{figure*}[t]
		\centering
		\begin{small}
			\begin{tabular}{cccc}
					\multicolumn{4}{c}{\hspace{-6mm} \includegraphics[height=2.3mm]{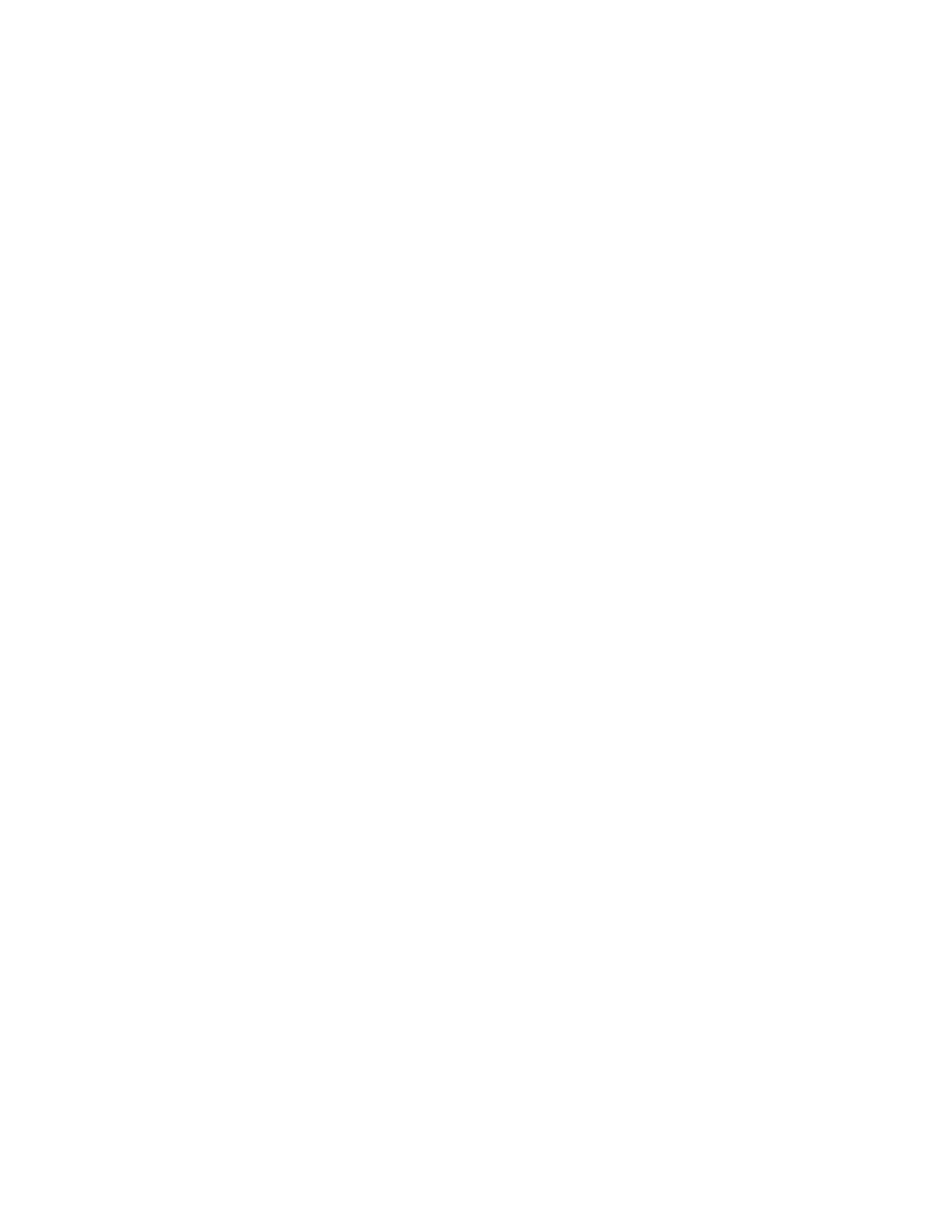}}  \\[-2mm]
				\hspace{-2mm} \includegraphics[height=26mm]{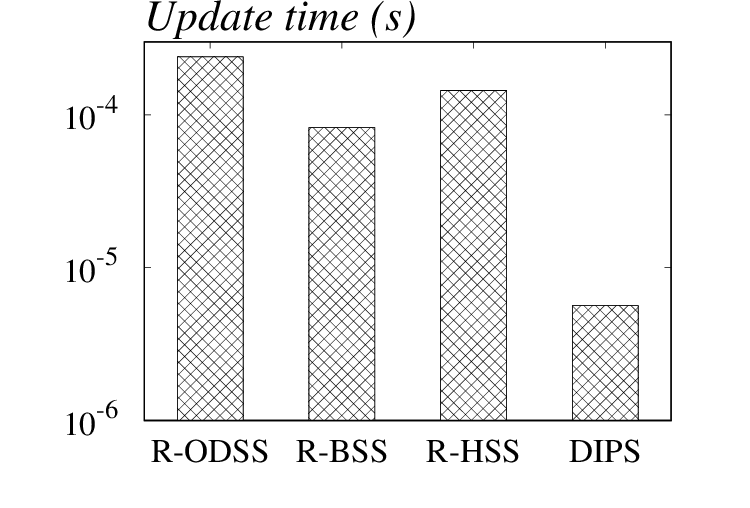} &
				\hspace{-2mm} \includegraphics[height=26mm]{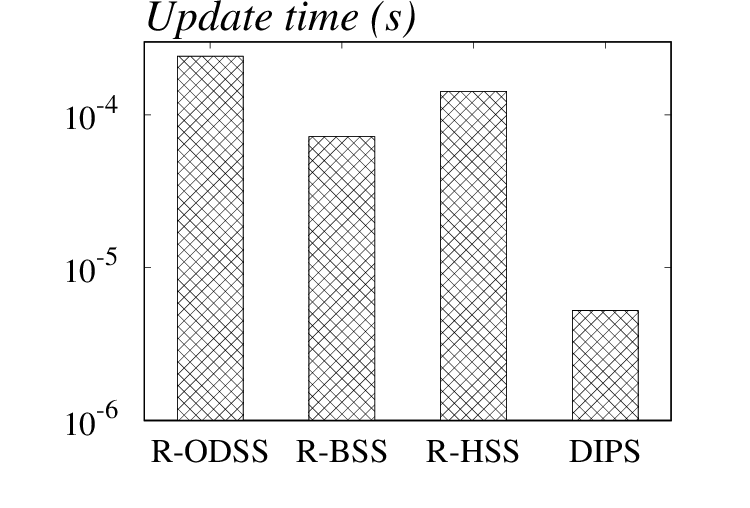}&
				\hspace{-2mm} \includegraphics[height=26mm]{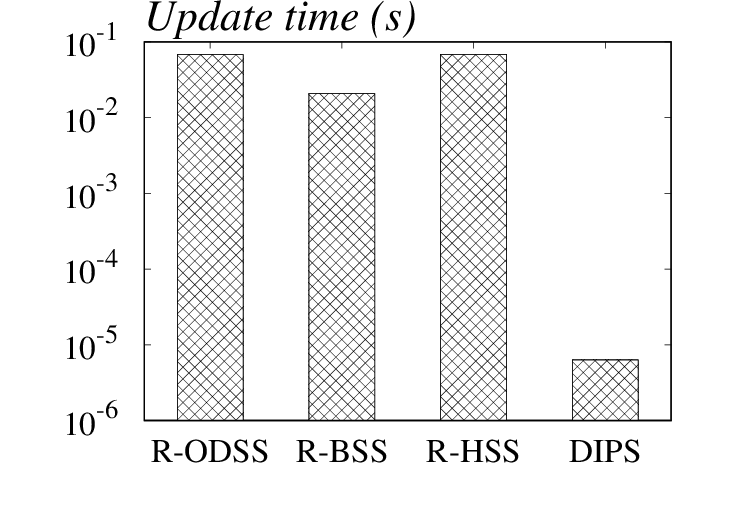} &
				\hspace{-2mm} \includegraphics[height=26mm]{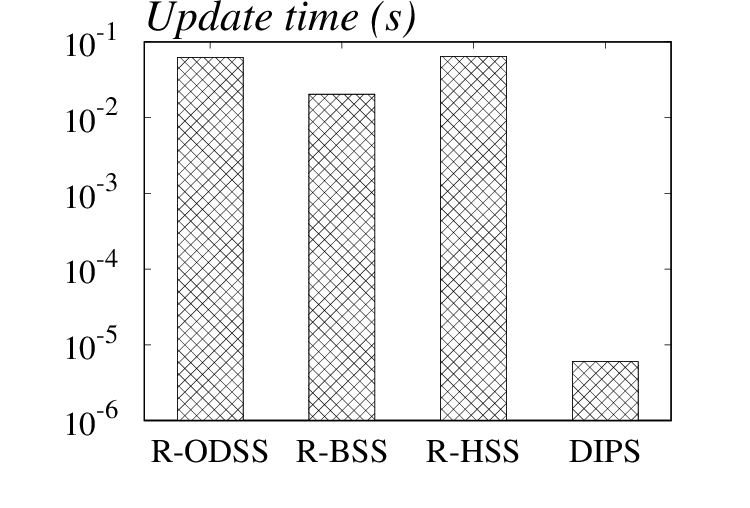}
				\\[-2mm]
				\hspace{-1mm} (a) OL - Exponential distribution &
				\hspace{-1mm} (b) OL - Weibull distribution &
				\hspace{-1mm} (c) TW - Exponential distribution &
				\hspace{-1mm} (d) TW - Weibull distribution \\[-1mm]
			\end{tabular}
			\vspace{-2mm}
			\caption{Update time of dynamic IM algorithm based on different \pps~indexes.} \label{fig:imupd}
			\vspace{-2mm}
		\end{small}
	\end{figure*}

	\subsection{Memory Usage}
        Table~\ref{tab:mem} shows the memory usage of DIPS and R-ODSS for different input sizes ranging from $10^6$ to $10^9$ elements. Both methods exhibit linear space complexity as expected from their theoretical analysis. While DIPS uses slightly more memory than R-ODSS (about 1.1-1.6x), the difference is relatively small and remains consistent across different input sizes. For example, with $n=10^9$ elements, DIPS uses 65.38GB compared to R-ODSS's 64.23GB, representing only a 1.8\% increase. This small overhead is acceptable given DIPS's improved update performance. The other competitors (R-HSS and R-BSS) are omitted from the comparison as they use similar or more space than R-ODSS while providing worse time complexity guarantees.
        \begin{table}[t]\centering
            \caption{Memory usage.}
            \label{tab:mem}
            \scalebox{0.8}{\begin{tabular}
            {|c|c|c|c|c|c|}
                \hline \diagbox{Method}{Memory Usage (GB)}{n} 
                    & $10^6$            & $10^7$            & $10^8$             & $10^9$                
                \\\hline
                DIPS &$0.24$            & $0.84$            & $6.08$             & $65.38$
            \\\hline
                R-ODSS &$0.15$            & $0.71$            & $5.95$             & $64.23$
            \\\hline
            \end{tabular}}
        \end{table}
        

\section{Application to IM in Evolving Graphs}\label{sec:dynamic-im}
    In this section, we apply our method, \dips, to the problem of Influence Maximization (IM) in evolving graphs. Influence Maximization is a fundamental data mining task that has been extensively studied for decades. Given a positive integer $k$ and a cascade model, the goal is to identify a set of $k$ nodes, also called seed nodes, in a network that can maximize the expected spread of influence, such as information or behaviors, throughout the network. Weighted Cascade (WC) model is one of the most popular cascade model widely used and tested in the literature. In this model, each edge $\langle u,v \rangle$ is assigned with a weight $w(u,v)$, and the probability $p(u,v)$ associated with each edge $\langle u,v \rangle$ is $ \frac{w(u,v)}{\sum_{x\in IN(v)}w(x,v)}$, where $IN(v)$ is the set of in-neighbors of vertex $v$. A simple setting in WC is to make all weights equal. In this case, $p(u,v)=\frac{1}{d_{in}(v)}$. However, existing study shows that such a setting ignores the fact that given a vertex $v$, the importance of its incoming neighbors may actually differ drastically. Thus, existing studies, e.g., \cite{GoyalBL10,KutzkovBBG13}, suggest setting the weights and hence the probabilities differently. 

    For IM, the most popular approaches use the idea of random Reverse Reachable (RR) sets \cite{BorgsBCL14,Guo0WC20Subsim,GuoFZW23,FengCGZW24,GuoWWLT22,BianGWY20}. For WC model, a random RR set is generated as follows: {\em (i)} A vertex $t$ is randomly chosen as the target vertex and marked as ``visited""; {\em (ii)} For each visited vertex $v$, it has a single chance to activate each of its incoming neighbor $u$ following the probability $p(u,v)$, and if $u$ is marked as activated it is marked also as visited; {\em (iii)} The process repeats until all visited nodes have finished their single chance to active their incoming neighbors. The step (ii) is essentially a \pps~ problem. For the RR set-based solutions, they sample a large number of random RR sets and then find the seed set based on the sampled random RR sets.

    In the context of evolving graphs, where network structures change over time, the dynamic IM problem extends the traditional IM framework by accounting for these temporal changes. This requires strategies that can adapt to the evolving network topology, which is crucial in dynamic environments such as social media platforms. In fully dynamic network models, where users can join or leave the network and influence propagation can change in real-time, traditional static IM algorithms fall short. According to research by Peng \cite{Peng}, no algorithm can guarantee a meaningful approximation in such dynamic settings without incurring significant computational costs. Re-running an IM algorithm upon each update achieves the running time lower bound, indicating that no more efficient methods are available under current theoretical limits.

    Existing static IM algorithms are designed for static networks and require complete rebuilding of all relevant structures when the network changes. This process is costly and inefficient. If the structures needed for running the IM algorithm could be updated incrementally rather than rebuilt from scratch, the query time could be greatly improved. This insight motivates the need for new approaches that can efficiently handle changes in network topology.

    In the following, we conduct experiments to evaluate the performance by integrating the above sampling structures into the state-of-the art IM algorithm SUBSIM \cite{Guo0WC20Subsim}, focusing on both the running time for IM and the update time required for edge insertions and deletions. The experiments were conducted on two real-world graphs, Orkut-Links (OL) and Twitter (TW), both of which are publicly available datasets~\cite{snapnets}. The Orkut graph consists of 3,072,441 nodes and 117,185,083 edges, and it is undirected. The Twitter graph is directed and comprises 41,652,230 nodes and 1,146,365,182 edges. Following \cite{Guo0WC20Subsim,ODSS}, for each dataset, we examined scenarios where the edge weights follow two types of distributions: the exponential distribution and the Weibull distribution. For the exponential distribution, we set the parameter $\lambda = 1$. For the Weibull distribution, the probability density function is given by:
    $f(x) = \frac{b}{a} \left(\frac{x}{a}\right)^{b-1} e^{-(x/a)^b}$
    where $a$ is the scale parameter and $b$ is the shape parameter. For each edge in the graph, the parameters $a$ and $b$ are drawn uniformly from interval [0, 10].

    \subsection{Running Time of Dynamic IM Algorithms}
        In our experiments, we vary the size $k$ of the seed set across the values $\{1, 10, 50, 100, 200, 500, 1000, 2000\}$. Each algorithm is repeated five times to generate the seed set, and we reported the average running time. The results are presented in Fig.~\ref{fig:imtime}. As shown in the figure, our method demonstrates competitive running efficiency as that using \rdss~ and consistently outperforms the remaining competitors as the sampling efficiency of our \dips~ and \rdss~ are optimal. This demonstrates the superb efficiency by integrating \dips~ into IM algorithms.  
    \subsection{Update Time of Dynamic IM Algorithms}
        To simulate the evolution of social networks, we uniformly select $10^6$ edges from each graph and measure the average time required to update the sampling structures for the insertion and deletion of these edges. Fig.~\ref{fig:imupd} presents the update times of all algorithms. On the OL dataset, \dub~ outperforms its competitors by at least one order of magnitude. On the larger TW dataset, \dub~ outperforms its competitors by at least three orders of magnitude. This shows the superb efficiency of our \dips~ framework. 


\section{Conclusion}\label{sec:conclusion} 
    In this paper, we study the \pps~problem under the dynamic setting.
    We propose DIPS, an optimal dynamic index for \pps. Our DIPS achieves an expected query time of $\bigone$ and an expected update time of $\bigone$, with space consumption of $\bigon$. Experimental results suggest that DIPS significantly outperforms its competitors in terms of update efficiency while maintaining competitive query time compared to states of the art.
    

\balance

\bibliographystyle{ACM-Reference-Format}
\bibliography{reference}


\appendix
\section{Pseudocode and Proof of Lem.~ \ref{wss_grs}}\label{apd:grs}
     The pseudocode of the query and update algorithms for the bounded weight ratio (corresponding to Lem.~ \ref{wss_grs}) is provided in Alg. \ref{alg:bwr}. 

 \header{\bf Proof of Lem.~ \ref{wss_grs}}.
        We will maintain a dynamic array $A$ in arbitrary order and a hash table to map from each element $v\in T$ to its location in array $A$. Note that in this paper, we can use the standard background rebuilding technique~\cite{rebuilding} to eliminate the amortized cost associated with the rebuilding process. We further maintain the size $t$ of $T$, the total weight $W_T$ of all elements in $T$. To answer queries, let us first select candidates by overestimating all probabilities to be $\bar{p}=\frac{c\cdot \bar{w}}{W_\S}$. To do so, we first generate a random number $u$ from $\unif{0}{1}$. If $u\leq W_T/W_\S\triangleq q$, we locate the first candidate by setting $j=\lf\frac{\log(1 - q\cdot\unif{0}{1})}{\log(1 - \bar{p})}\rf$, after this, we can always find the next candidate by jumping to the next $g = \lf\frac{\log(1 - \unif{0}{1})}{\log(1 - \bar{p})}\rf$-th element. We keep jumping until we reach the end of the array $A$. Because we may grant each candidate $v$ with extra probability, we should draw a number $u$ from $\unif{0}{\bar{p}}$ and finally include $v$ in the sample set $X$ if and only if $u\leq\frac{c\cdot w}{W_\S}$. In this way, each element $v\in\S$ is sampled into $X$ with probability $\bar{p}\times\frac{c\cdot w}{\bar{p}\cdot W_\S}=\frac{c\cdot w}{W_\S}$.

 \begin{algorithm}[h!]
        \caption{Bounded weight ratio}
        \label{alg:bwr}
        \DontPrintSemicolon
        
        \SetKwProg{Init}{init$(\S, w, T)$}{:}{}
        \Init{} {
            $A \gets$ new dynamic array;
            $H \gets$ new hash table\;
            $t \gets |T|$;
            $W_T \gets 0$\;
            \ForEach{$v \in T$}{
                Append $v$ to $A$\;
                $H[v] \gets$ position of $v$ in $A$\;
                $W_T \gets W_T + w(v)$\;
            }
        }
        
        \SetKwProg{Cw}{change\_w$(v, w)$}{:}{}
        \Cw{} {
            $W_T \gets W_T - w(v) + w$; $w(v) \gets w$\;
        }
        
        \SetKwProg{Ins}{insert$(v, w)$}{:}{}
        \Ins{} {
            Append $v$ to $A$\;
            $H[v] \gets$ position of $v$ in $A$\;
            $W_T \gets W_T + w$; $t \gets t + 1$\;
        }
    
        \SetKwProg{Del}{delete$(v)$}{:}{}
        \Del{} {
            $i \gets H[v]$\;
            $W_T \gets W_T - w(v)$\;
            Remove $v$ from $H$, swap $A[i]$ with last element in $A$, update position in $H$ for swapped element, and remove the last element from $A$\;
            $t \gets t - 1$\;
        }
    
        \SetKwProg{Que}{query$()$}{:}{}
        \Que{} {
            $X \gets \emptyset$; $u \gets \unif{0}{1}$; $q \gets W_T/W_\S$\;
            \If{$u \leq q$}{
                $\bar{p} \gets c\cdot\bar{w}/W_\S$\;
                $j \gets \lf\frac{\log(1 - q\cdot\unif{0}{1})}{\log(1 - \bar{p})}\rf$\;
                \While{$j < t$}{
                    $v \gets A[j]$\;
                    $u \gets \unif{0}{\bar{p}}$\;
                    \If{$u \leq c\cdot w(v)/W_\S$}{
                        $X \gets X \cup \{v\}$\;
                    }
                    $g \gets \lf\frac{\log(1 - \unif{0}{1})}{\log(1 - \bar{p})}\rf$\;
                    $j \gets j + g$\;
                }
            }
            \Return{$X$}\;
        }
    \end{algorithm}

            We next analyze the query time.
        Suppose we have located the first candidate at position $j$, let $g_1, g_2, \ldots$ be the sequence of random numbers we generate in the remaining process. Let $t$ be the smallest number such that $\sum_{i=1}^{t+1} g_i\geq n-j$. On the one hand, $t$ can be interpreted as the number of jumps before reaching the end. On the other hand, $t$ can be interpreted as the number of success trails among $n-j$ independent trails each with success probability $\bar{p}$. We can take the later interpretation and derive that $t\sim\bin{n-j}{\bar{p}}$. The expected number of jumps is thus
        $$E[t]=(n-j)\bar{p}\leq n\bar{p}=\sum_{v\in\S}\frac{c\cdot\bar{w}}{W_\S}\leq b\sum_{v\in\S}\frac{c\cdot w}{W_\S}\leq bc.$$ 
        So the expected query time is $q(1+E[t])+(1-q)=\bigo{1}$.

       \begin{algorithm}[t!]
        \caption{Update operation of Lemma~\ref{wss_sr}}
        \label{alg:sr_update}
        \DontPrintSemicolon

        \SetKwProg{Cw}{change\_w$(v, w_{new})$}{:}{}
        \Cw{} {
            $j_{old} \gets \lfloor \log_b w(v) \rfloor$; $j_{new} \gets \lfloor \log_b w_{new} \rfloor$\;
            \If{$j_{old} \neq j_{new}$}{
                Remove $v$ from $B_{j_{old}}$ and add $v$ to $B_{j_{new}}$\;
                Update $D(B_{j_{old}})$ and $D(B_{j_{new}})$\;
            }
            Update weights $w(B_{j_{old}}), w(B_{j_{new}}), w(C_{t_{old}}), w(C_{t_{new}})$\;
            Update $w'(B_{j_{old}}), w'(B_{j_{new}}), w'(C_{t_{old}}), w'(C_{t_{new}})$\;
            Update $\widetilde{D}(C_{t_{old}})$ and $\widetilde{D}(C_{t_{new}})$\;
            $W_\S \gets W_\S - w(v) + w_{new}$; $w(v) \gets w_{new}$\;
        }
        
        \SetKwProg{Ins}{insert$(v, w)$}{:}{}
        \Ins{} {
            \If{$Size \geq 2 \cdot OldSize$}{
                Rebuild entire structure\;
                \Return\;
            }
            $j \gets \lfloor \log_b w(v) \rfloor$\;
            Add $v$ to $B_j$ and  update $D(B_j)$\;
            Update weights $w(B_j), w(C_t), w'(B_j), w'(C_t)$\;
            Update $\widetilde{D}(C_t)$\;
            $W_\S \gets W_\S + w$; $Size \gets Size + 1$\;
        }

        \SetKwProg{Del}{delete$(v)$}{:}{}
        \Del{} {
            \If{$Size \leq OldSize/2$}{
                Rebuild entire structure\;
                \Return\;
            }
            $j \gets \lfloor \log_b w(v) \rfloor$\;
            Remove $v$ from $B_j$ and update $D(B_j)$\;
            Update weights $w(B_j), w(C_t), w'(B_j), w'(C_t)$\;
            Update $\widetilde{D}(C_t)$\;
            $W_\S \gets W_\S - w(v)$; $Size \gets Size - 1$\;
        }
        \end{algorithm}




\section{Example of Buckets and Chunks}\label{apd:example}
\begin{example}
    Consider a set of elements with $n=100$ and $b=2$. Suppose we have elements with weights 3.5, 5.2, 6.8 (falling in $(2^1,2^2]$), 17.3, 19.1 (falling in $(2^4,2^5]$), and 33.2 (falling in $(2^5,2^6]$). These elements would be placed into buckets $B_1$, $B_4$, and $B_5$ respectively. The total weights of these buckets would be $w(B_1)=15.5$, $w(B_4)=36.4$, and $w(B_5)=33.2$. With $\lc\log_2 100\rc=7$, bucket $B_1$ would belong to chunk $C_0$ (since $1\in[0,6]$), while buckets $B_4$ and $B_5$ would belong to chunk $C_1$ (since $4,5\in[7,13]$).
\end{example}

\section{Update Operations of Lemma~\ref{wss_sr}}\label{apd:sr_update}
                The pseudocode is provided in Alg.~\ref{alg:sr_update}. We need to reconstruct the entire data structure every time the size of the set $\S$ doubles or halves since the last construction. The amortized cost for reconstruction is $\bigone$. Recap that we can implement the background rebuilding technique for the rebuilding process, thus this amortized cost can be eliminated. If an update operation does not incur reconstruction, its cost is also $\bigone$ because it changes only one chunk and one bucket for \ins~ and \del~ and at most two chunks and two buckets for \upd.  Thus, it only takes constant time to update these information. Also, it only takes constant time to update $W$ and $Size$.
 

    \begin{figure*}[t]
    	\centering
    	\begin{small}
    		\begin{tabular}{cccc}
    	 			\multicolumn{4}{c}{\hspace{-6mm} \includegraphics[height=2.3mm]{figure/nvsu_normal_legend.eps}}  \\[-2mm]
    			\hspace{-2mm} \includegraphics[height=26mm]{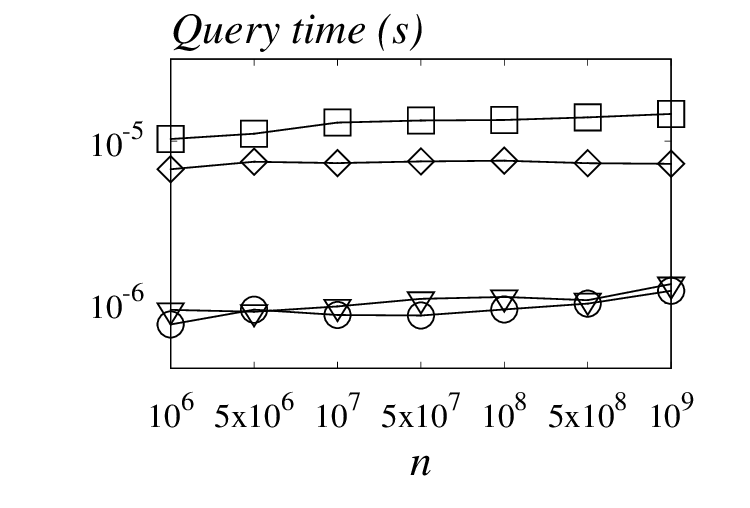} &
    			\hspace{-2mm} \includegraphics[height=26mm]{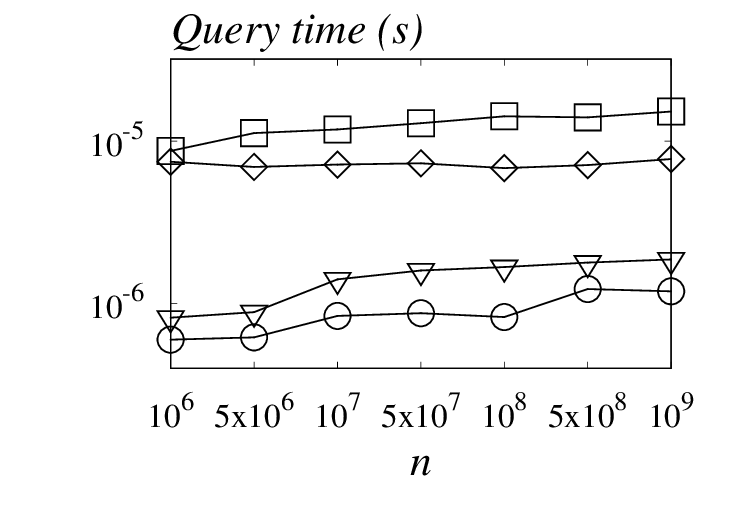} &
    			\hspace{-2mm} \includegraphics[height=26mm]{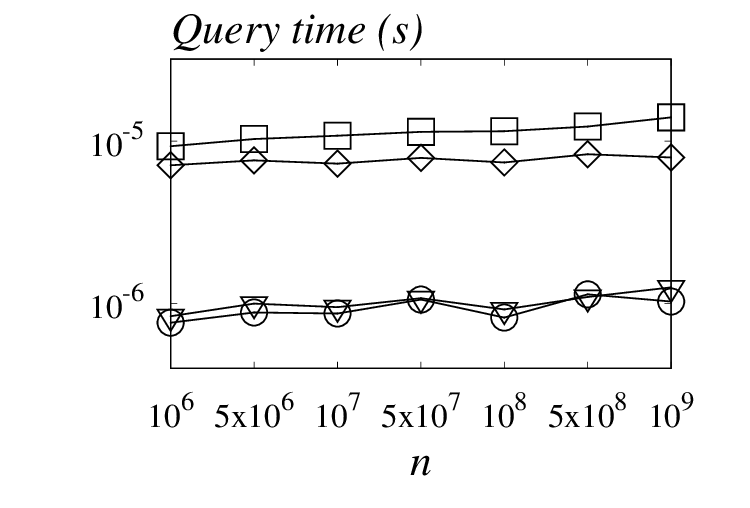}&
    			\hspace{-2mm} \includegraphics[height=26mm]{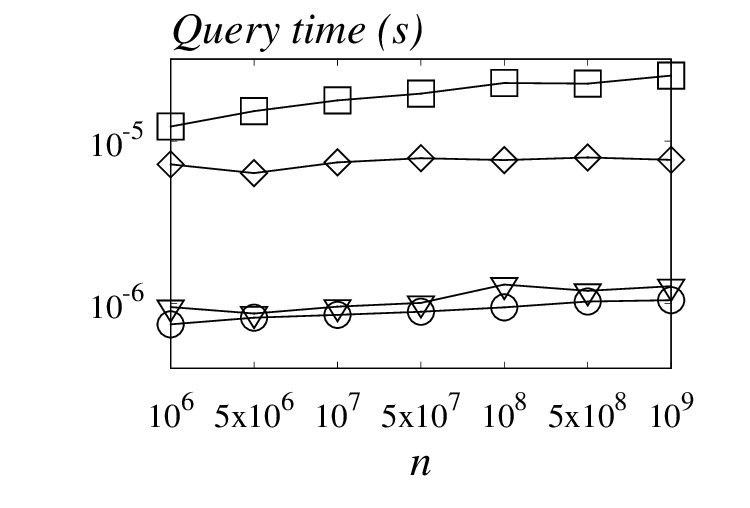}
    			\\[-2mm]
    			\hspace{-1mm} (a) Exponential distribution &
    			\hspace{-1mm} (b) Normal distribution &
    			\hspace{-1mm} (c) Half-normal distribution&
    			\hspace{-1mm} (d) Log-normal distribution \\[-1mm]
    		\end{tabular}
    		\vspace{-2mm}
    		\caption{Varying $n$: Query time (in seconds) on different distributions. (c=0.8)} \label{fig:nvsq8}
    		\vspace{-2mm}
    	\end{small}
    \end{figure*}
    
    \begin{figure*}[t]
    	\centering
    	\begin{small}
    		\begin{tabular}{cccc}
    	 			\multicolumn{4}{c}{\hspace{-6mm} \includegraphics[height=2.3mm]{figure/nvsu_normal_legend.eps}}  \\[-2mm]
    			\hspace{-2mm} \includegraphics[height=26mm]{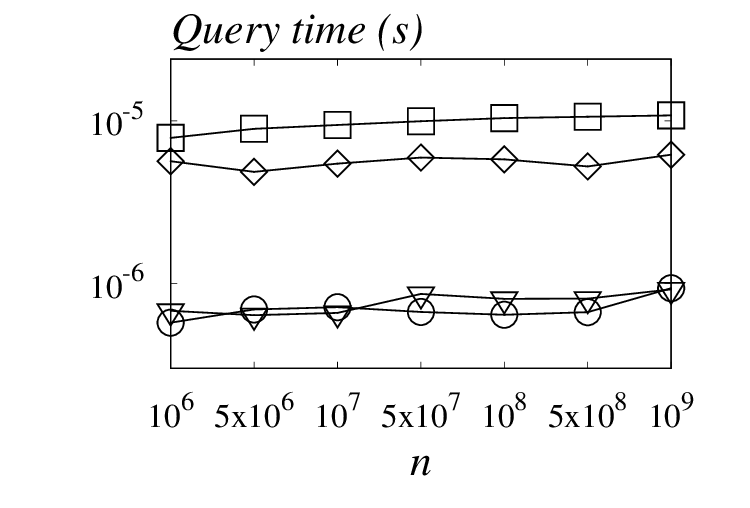} &
    			\hspace{-2mm} \includegraphics[height=26mm]{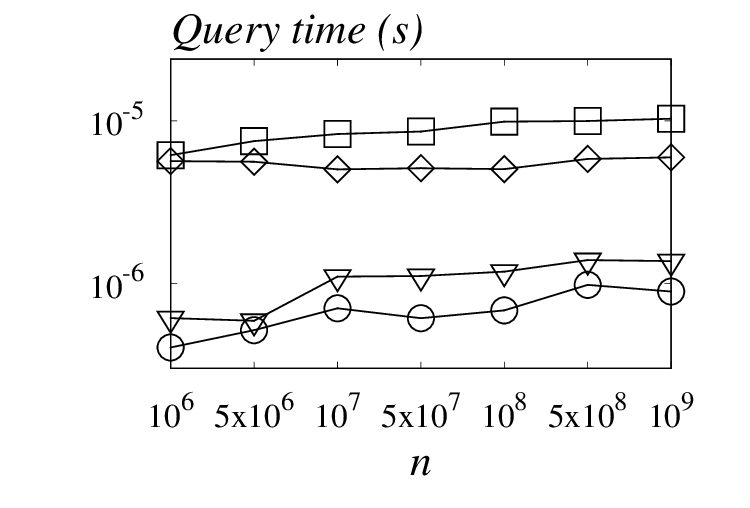} &
    			\hspace{-2mm} \includegraphics[height=26mm]{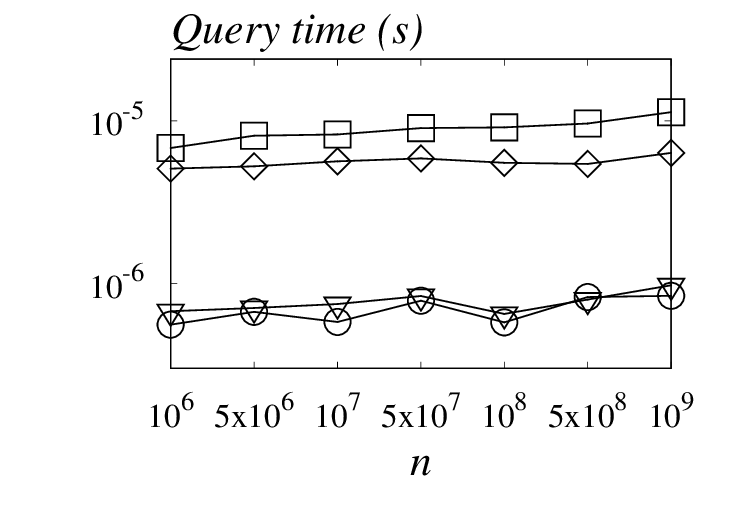}&
    			\hspace{-2mm} \includegraphics[height=26mm]{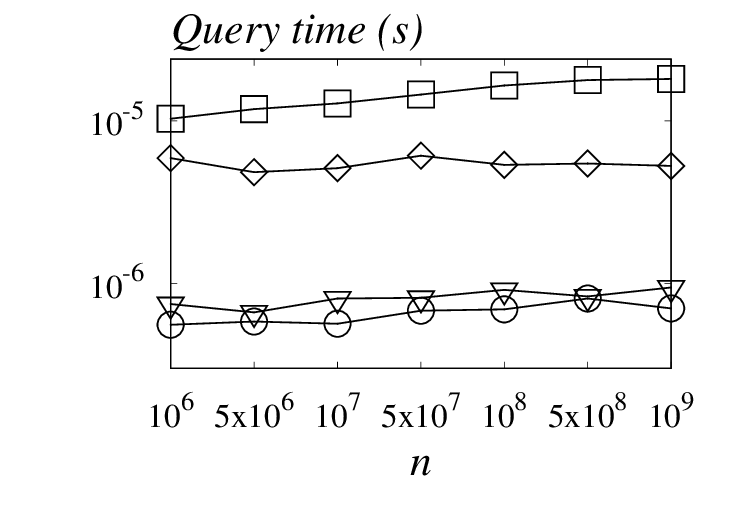}
    			\\[-2mm]
    			\hspace{-1mm} (a) Exponential distribution &
    			\hspace{-1mm} (b) Normal distribution &
    			\hspace{-1mm} (c) Half-normal distribution&
    			\hspace{-1mm} (d) Log-normal distribution \\[-1mm]
    		\end{tabular}
    		\vspace{-2mm}
    		\caption{Varying $n$: Query time (in seconds) on different distributions. (c=0.6)} \label{fig:nvsq6}
    		\vspace{-2mm}
    	\end{small}
    \end{figure*}
    
    \begin{figure*}[t]
    	\centering
    	\begin{small}
    		\begin{tabular}{cccc}
    	 			\multicolumn{4}{c}{\hspace{-6mm} \includegraphics[height=2.3mm]{figure/nvsu_normal_legend.eps}}  \\[-2mm]
    			\hspace{-2mm} \includegraphics[height=26mm]{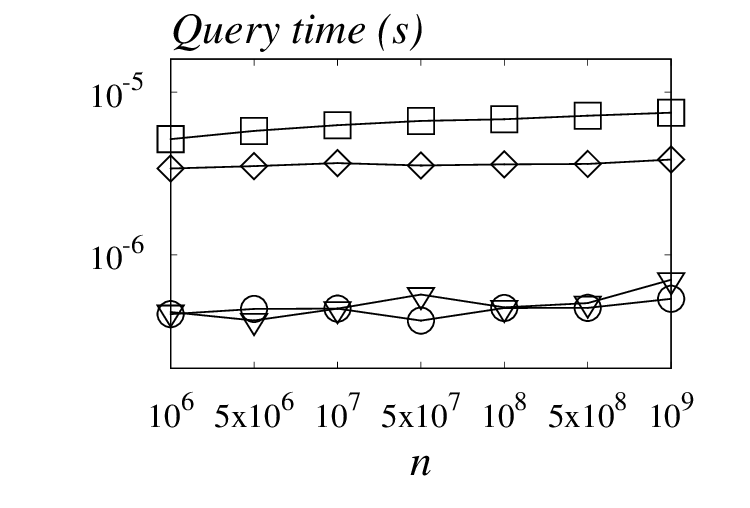} &
    			\hspace{-2mm} \includegraphics[height=26mm]{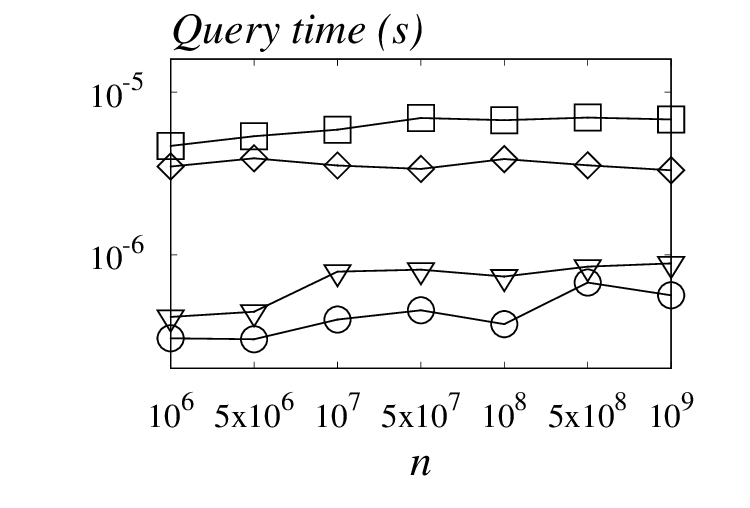} &
    			\hspace{-2mm} \includegraphics[height=26mm]{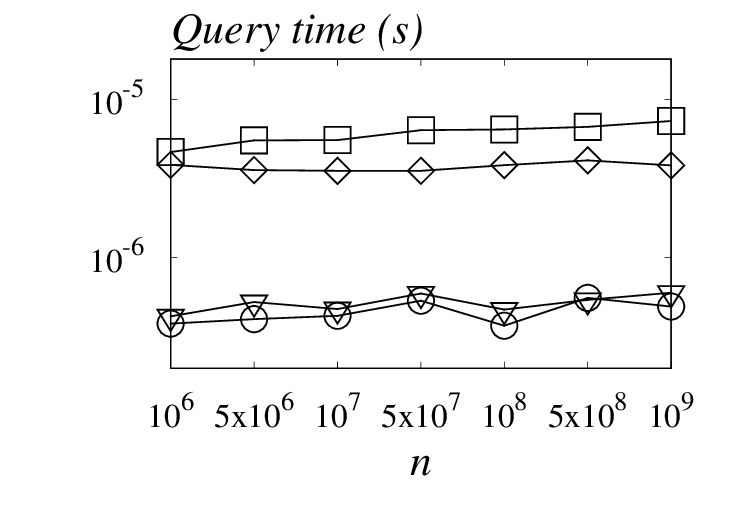}&
    			\hspace{-2mm} \includegraphics[height=26mm]{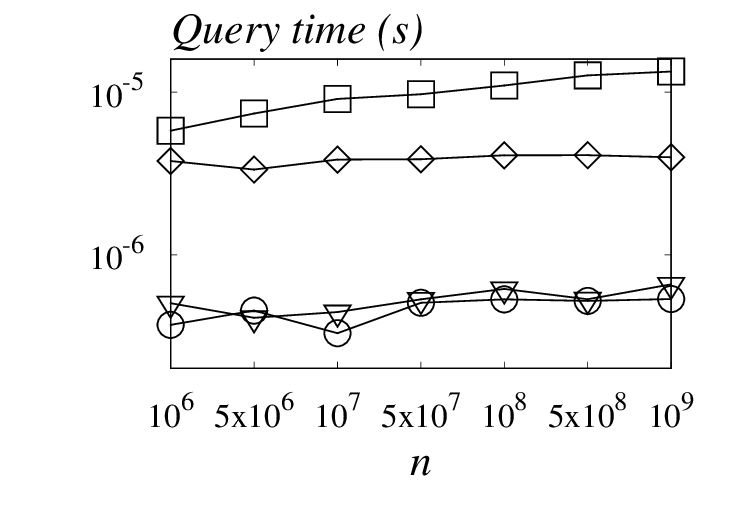}
    			\\[-2mm]
    			\hspace{-1mm} (a) Exponential distribution &
    			\hspace{-1mm} (b) Normal distribution &
    			\hspace{-1mm} (c) Half-normal distribution&
    			\hspace{-1mm} (d) Log-normal distribution \\[-1mm]
    		\end{tabular}
    		\vspace{-2mm}
    		\caption{Varying $n$: Query time (in seconds) on different distributions. (c=0.4)} \label{fig:nvsq4}
    		\vspace{-2mm}
    	\end{small}
    \end{figure*}

\section{Experimental Results for Different $\boldsymbol{c}$ Values}
To evaluate the impact of the sampling parameter $c$, we conduct additional experiments by varying $c$ across three different values: 0.8, 0.6, and 0.4. The query time performance across different probability distributions is presented in Fig.~\ref{fig:nvsq8} (for $c=0.8$), Fig.~\ref{fig:nvsq6} (for $c=0.6$), and Fig.~\ref{fig:nvsq4} (for $c=0.4$). These results complement our main experiments where $c=1.0$ and demonstrate the robustness of our approach across different sampling intensities.

\end{document}